\let\@twosidetrue\@twosidefalse
\let\@mparswitchtrue\@mparswitchfalse
\DeclareMathOperator{\rank}{rank}
\newcommand*{\algrule}[1][\algorithmicindent]{%
  \makebox[#1][l]{%
    \hspace*{.2em}
    \vrule height .75\baselineskip depth .25\baselineskip
  }
}
\def\ALG@printindent{%
    \ifnum \theALG@nested>0
    \ifx\ALG@text\ALG@x@notext
    \else
    \unskip
    \ALG@printindent@tempcnta=1
    \loop
    \algrule[\csname ALG@ind@\the\ALG@printindent@tempcnta\endcsname]%
    \advance \ALG@printindent@tempcnta 1
    \ifnum \ALG@printindent@tempcnta<\numexpr\theALG@nested+1\relax
    \repeat
    \fi
    \fi
}
\patchcmd{\ALG@doentity}{\noindent\hskip\ALG@tlm}{\ALG@printindent}{}{\errmessage{failed to patch}}
\patchcmd{\ALG@doentity}{\item[]\nointerlineskip}{}{}{} 
\algnewcommand\algorithmicforeach{\textbf{for each}}
\title{Super-stability in the \\ Student-Project Allocation Problem with Ties\thanks{A preliminary version of a part of this paper appeared in \cite{OM18}.} \thanks{S.~Olaosebikan: Supported by a College of Science and Engineering Scholarship, University of Glasgow. Orcid ID: 0000-0002-8003-7887. D.~Manlove: Supported by grant EP/P028306/1 from the Engineering and Physical Sciences Research Council. Orcid ID: 0000-0001-6754-7308.}}
\author{
Sofiat Olaosebikan \and David Manlove
}
\institute{School of Computing Science, University of Glasgow, \\e-mail: \texttt{Sofiat.Olaosebikan@glasgow.ac.uk, David.Manlove@glasgow.ac.uk}
}
\begin{document}

\maketitle              

\begin{abstract}
The \emph{Student-Project Allocation problem with lecturer preferences over Students} ({\sc spa-s}) involves assigning students to projects based on student preferences over projects, lecturer preferences over students, and the maximum number of students that each project and lecturer can accommodate. This classical model assumes that each project is offered by one lecturer and that preference lists are strictly ordered. Here, we study a generalisation of {\sc spa-s} where ties are allowed in the preference lists of students and lecturers, which we refer to as the \emph{Student-Project Allocation problem with lecturer preferences over Students with Ties} ({\sc spa-st}). We investigate stable matchings under the most robust definition of stability in this context, namely \emph{super-stability}. We describe the first polynomial-time algorithm to find a super-stable matching or to report that no such matching exists, given an instance of {\sc spa-st}. Our algorithm runs in $O(L)$ time, where $L$ is the total length of all the preference lists. Finally, we present results obtained from an empirical evaluation of the linear-time algorithm based on randomly-generated {\sc spa-st} instances. Our main finding is that, whilst super-stable matchings can be elusive when ties are present in the students' and lecturers' preference lists, the probability of such a matching existing is significantly higher if ties are restricted to the lecturers' preference lists.

\keywords{Student-project allocation \and Stable matching \and Super-stability \and Polynomial-time algorithm \and Empirical evaluation}

\end{abstract}

\thispagestyle{empty}
\setcounter{page}{1}
\pagestyle{headings}
\section{Introduction}
\label{introduction}
The \emph{Student-Project Allocation problem} ({\sc spa}) \cite{AIM07,CFG19,Man13} is a many-one matching problem which involves three sets of entities: students, projects and lecturers. Each project is proposed by one lecturer and each student is required to rank a subset of these projects that she finds acceptable, in order of preference. Further, each lecturer may have preferences over the students that find her projects acceptable and/or the projects that she offers. Typically there may be capacity constraint on the number of students that each project and lecturer can accommodate. The goal is to find a \emph{matching}, i.e., an assignment of students to projects based on the stated preferences such that each student is assigned to at most one project, and the capacity constraints on projects and lecturers are not violated.

Applications of {\sc spa} can be found in many university departments, for example, the School of Computing Science, University of Glasgow \cite{KIMS15}, the Faculty of Science, University of Southern Denmark \cite{CFG19}, the Department of Computing Science, University of York \cite{Kaz02}, and elsewhere \cite{AB03,RGSA17,HSVS05}. In this work, we will concern ourselves with a variant of {\sc spa} that involves lecturer preferences over students, which is known as the \emph{Student-Project Allocation problem with lecturer preferences over Students} ({\sc spa-s}) \cite{AIM07,Man13}. This variant falls under the category of bipartite matching problem with two-sided preferences.\footnote{For further reading on the classification of matching problems, we refer the interested reader to \cite{Man13}.}  In this context, it has been argued that a natural property for a matching to satisfy is that of \emph{stability} \cite{Rot84,Rot90,Rot91}. Informally, a \emph{stable matching} ensures that no student and lecturer would have an incentive to deviate from the matching by forming a private arrangement involving some project.

The classical {\sc spa-s} model assumes that preferences are strictly ordered. However, this might not be achievable in practice. For instance, a lecturer may be unable or unwilling to provide a strict ordering of all the students who find her projects acceptable. Such a lecturer may be happier to rank two or more students equally in a tie, which indicates that the lecturer is indifferent between the students concerned. This leads to a generalisation of {\sc spa-s} which we refer to as the \emph{Student-Project Allocation problem with lecturer preferences over Students with Ties} ({\sc spa-st}).

If we allow ties in the preference lists of students and lecturers, three different stability definitions naturally arise. Suppose $M$ is a matching in an instance of {\sc spa-st}. Informally, we say that $M$ is \emph{weakly stable, strongly stable} or \emph{super-stable} if there is no student and lecturer such that if they decide to form an arrangement outside the matching, respectively,
\begin{itemize}
\item[(i)] both of them would be better off, 
\item[(ii)] one of them would be better off and the other would be no worse off,
\item[(iii)] neither of them would be worse off.
\end{itemize}

With respect to this informal definition, a super-stable matching is also strongly stable, and a strongly stable matching is also weakly stable. These concepts were first defined and studied by Irving \cite{Irv94} in the context of the \emph{Stable Marriage problem with Ties} ({\sc smt}), and subsequently extended to the \emph{Hospitals/Residents problem with Ties} ({\sc hrt}) \cite{IMS00,IMS03} (where {\sc hrt} is the special case of {\sc spa-st} in which each lecturer offers only one project, and the capacity of each project is the same as the capacity of the lecturer offering the project; and {\sc smt} is a restriction of {\sc hrt} where the capacity of each hospital is $1$).

Considering the weakest of the three stability concepts mentioned above, every instance of {\sc spa-st} admits a weakly stable matching (this follows by breaking the ties in an arbitrary fashion and applying the stable matching algorithm described in \cite{AIM07} to the resulting {\sc spa-s} instance). However, such matchings could be of different sizes \cite{MIIMM02}. Thus opting for weak stability leads to the problem of finding a weakly stable matching that matches as many students to projects as possible -- a problem that is known to be NP-hard \cite{IMMM99,MIIMM02}, even for the so-called \emph{Stable Marriage problem with Ties and Incomplete lists} ({\sc smti}), which is an extension of {\sc smt} in which the preference lists need not be complete. However, we note that a $\frac{3}{2}$-approximation algorithm was described in \cite{CM18} for the problem of finding a maximum size weakly stable matching, given an instance of {\sc spa-st}.\footnote{This approximation algorithm finds a weakly stable matching that is at least two-thirds the size of a maximum weakly stable matching.}

Although a super-stable matching can be elusive, it avoids the problem of finding a maximum size weakly stable matching, because, as we will show in this paper, analogous to the {\sc hrt} case \cite{IMS00}: (i) all super-stable matchings have the same size; (ii) finding one or reporting that none exists can be accomplished in linear-time; and (iii) if a super-stable matching $M$ exists then all weakly stable matchings are of the same size (equal to the size of $M$), and match exactly the same set of students.  Furthermore, Irving \emph{et al}.~\cite{IMS00} argued that super-stability is a very natural solution concept in cases where agents have incomplete information.  Central to their argument is the following proposition, stated for {\sc hrt} in \cite[Proposition 2]{IMS00}, which extends naturally to {\sc spa-st} as follows (see Section \ref{subsection:spa-st} for a proof).
\begin{restatable}[]{proposition}{superstability}
\label{proposition1}
Let $I$ be an instance of {\sc spa-st}, and let $M$ be a matching in $I$. Then $M$ is super-stable in $I$ if and only if $M$ is stable in every instance of {\sc spa-s} obtained from $I$ by breaking the ties in some way.
\end{restatable}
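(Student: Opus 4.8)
The plan is to prove both directions of the equivalence via their contrapositives, relying on a single structural fact relating the preferences of $I$ to those of any {\scriptsize SPA-S} instance $I'$ obtained from $I$ by breaking its ties: a strict preference in $I$ is preserved as a strict preference in every such $I'$, while a strict preference in $I'$ implies only a \emph{weak} preference (``at least as good'') in $I$, since it may have arisen from a broken tie of $I$; moreover, a tie of $I$ may be broken in either direction when forming $I'$. Crucially, the capacity-based conditions appearing in the blocking-pair definitions --- whether a project or a lecturer is undersubscribed, and whether $s_i$ is already assigned to a project offered by the lecturer of $p_j$ --- do not refer to preferences at all, and so are identical in $I$ and in every $I'$ obtained by breaking the ties.

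First I would establish the forward direction. Assume $M$ is super-stable in $I$ and let $I'$ be any {\scriptsize SPA-S} instance obtained by breaking the ties of $I$; I show $M$ is stable in $I'$. Suppose, for contradiction, that some pair $(s_i, p_j)$, with $p_j$ offered by $l_k$, blocks $M$ in $I'$. The blocking conditions in $I'$ use the \emph{strict} preferences of $I'$, each of which implies the corresponding \emph{weak} preference in $I$ by the structural fact above, and the capacity conditions carry over unchanged. Hence $(s_i, p_j)$ satisfies exactly the weak-preference conditions of a super-stability blocking pair in $I$, contradicting the super-stability of $M$. Therefore $M$ admits no blocking pair in $I'$, so $M$ is stable in $I'$.

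For the reverse direction I argue contrapositively. Suppose $M$ is \emph{not} super-stable in $I$, witnessed by a super-blocking pair $(s_i, p_j)$ with $p_j$ offered by $l_k$. I construct one particular tie-broken instance $I^\ast$ in which $(s_i, p_j)$ is a genuine blocking pair, so that $M$ fails to be stable in $I^\ast$. The construction breaks ties so that each weak preference used by the super-blocking pair becomes the required strict preference: if $s_i$ is indifferent between $p_j$ and $M(s_i)$ I place $p_j$ ahead of $M(s_i)$ in $s_i$'s list, and if the relevant lecturer condition is that $l_k$ is indifferent between $s_i$ and a worst student of $M(l_k)$ I break $l_k$'s ties so that $s_i$ strictly precedes that student; all remaining ties are broken arbitrarily. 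Any weak preference that was already strict in $I$ stays strict in $I^\ast$, and the preference-free capacity conditions are inherited from $I$. Thus $(s_i, p_j)$ meets the strict blocking conditions in $I^\ast$, contradicting stability of $M$ in every tie-broken instance.

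The routine but most delicate step is this reverse-direction construction: I must check that, across the several cases of the super-stability blocking-pair definition ($l_k$ undersubscribed, $s_i$ already assigned to $l_k$, or $l_k$ weakly preferring $s_i$ to a worst student of $M(l_k)$), the chosen tie-breaking simultaneously realises the corresponding strict condition without disturbing the capacity conditions, paying particular attention to the case in which several students of $M(l_k)$ are tied for worst, so that after breaking the tie there is indeed a student whom $l_k$ ranks strictly below $s_i$. Verifying this for each case, the two contrapositive arguments together yield the stated equivalence.
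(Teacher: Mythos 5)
Your overall strategy is exactly the paper's: the forward direction transfers each strict preference of the tie-broken instance $I'$ back to a weak preference in $I$ (the capacity conditions being preference-free), and the reverse direction constructs one particular tie-breaking in which the super-blocking pair survives as a strict blocking pair. The forward direction is fine as written.

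However, in the reverse direction your explicit tie-breaking rule and your closing case list both omit the type (iii) blocking pair, and as stated the construction would fail there. The super-stability definition has three cases, and in the third one $p_j$ is \emph{full} and the relevant comparison is with a worst student of $M(p_j)$ --- the students assigned to the \emph{project} --- not with a worst student of $M(l_k)$. Since $M(p_j) \subseteq M(l_k)$, these can differ: $l_k$ may strictly prefer $s_i$ to the worst student of $M(l_k)$ (who is assigned to some other project of $l_k$) while being merely indifferent between $s_i$ and the worst student $w$ of $M(p_j)$. Your rule then treats the tie between $s_i$ and $w$ as one of the ``remaining ties broken arbitrarily''; if it is broken in $w$'s favour, the pair $(s_i,p_j)$ does not block $M$ in $I^\ast$ at all --- conditions (i) and (ii) are unavailable because $p_j$ is full, and condition (iii) fails because $l_k$ now strictly prefers the worst student of $M(p_j)$ to $s_i$ --- so the contradiction you need does not materialise. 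The fix is what the paper does: break \emph{every} tie between $s_i$ and a student assigned to $l_k$ in $M$ in $s_i$'s favour (equivalently, apply your own guiding principle to the $M(p_j)$ comparison as well); since $M(p_j)\subseteq M(l_k)$, this single rule simultaneously realises the strict condition in cases (ii) and (iii), and the rest of your argument, including your observation about several students tied for worst, then goes through.
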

In a practical setting, suppose that a student $s_i$ has incomplete information about two or more projects and decides to rank them equally in a tie $T$, and a super-stable matching $M$ exists in the corresponding {\sc spa-st} instance $I$. Then $M$ is stable in every instance of {\sc spa-s} (obtained from $I$ by breaking the ties) that represents the true preferences of $s_i$. 
Consequently, we will focus on the concept of super-stability in the {\sc spa-st} context.

Unfortunately not every instance of {\sc spa-st} admits a super-stable matching. This is true, for example, in the case where there are two students, two projects and one lecturer, the capacity of each project is $1$, the capacity of the lecturer is $2$, and every preference list is a single tie of length 2; any matching will be undermined by some student $s_i$ and the lecturer involving a project that $s_i$ is not assigned to. Nonetheless, it should be clear from the discussions above that a super-stable matching should be preferred in practical applications when one does exist.

\paragraph{\textbf{Related work.}} Irving \emph{et al}.~\cite{IMS00} described an algorithm to find a super-stable matching given an instance of {\sc hrt}, or to report that no such matching exists. However, merely reducing an instance of {\sc spa-st} to an instance of {\sc hrt} and applying the algorithm described in \cite{IMS00} to the resulting {\sc hrt} instance does not work in general (we explain this further in Section \ref{subsect:cloning}). Other variants of {\sc spa} in the literature involve lecturer preferences over their proposed projects \cite{IMY12,MMO18,MO08}, lecturer preferences over (student, project) pairs \cite{AM09}, and no lecturer preferences at all \cite{KIMS15} (see \cite{CFG19} for a more detailed survey in this latter case). A similar model known as the \textit{Student-Project-Resource Matching-Allocation problem} ({\sc spr}) was recently considered in \cite{IYYY19}. This model is different from {\sc spa-s} in the following ways: (i) in {\sc spa-s}, the capacity of each project is fixed by the lecturer offering it, while in {\sc spr}, the capacity of each project is determined by the resources allocated to it; (ii) in {\sc spa-s}, each lecturer has a fixed capacity on the total number of students that can be assigned to her projects, while in {\sc spr}, there is no notion of lecturer capacity. 

\paragraph{\textbf{Our contribution.}} In this paper, we describe the first polynomial-time algorithm to find a super-stable matching or to report that no such matching exists, given an instance of {\sc spa-st}  -- thus solving an open problem given in \cite{AIM07,Man13}. Our algorithm is student-oriented because it involves the students applying to projects. Moreover, the algorithm returns the student-optimal super-stable matching, in the sense that if the given instance admits a super-stable matching then our algorithm will output a solution in which each assigned student has the best project that she could obtain in any super-stable matching that the instance admits. We also present the results of an empirical evaluation based on an implementation of our algorithm that investigates how the nature of the preference lists would affect the likelihood of a super-stable matching existing, with respect to randomly-generated {\sc spa-st} instances.\footnote{From a theoretical perspective, the likelihood of a stable matching existing has been explored for the Stable Roommates problem -- a non-bipartite generalisation of the Stable Marriage problem \cite{PI94}.} Our main finding from the empirical evaluation is that super-stable matchings are very elusive with ties in the students' and lecturers' preference lists. However, if the preference lists of the students are strictly ordered and only the lecturers express ties in their preference lists, the probability of a super-stable matching existing is significantly higher.

The remainder of this paper is structured as follows. We give a formal definition of the {\sc spa-s} problem, the {\sc spa-st} variant, and the super-stability concept in Section \ref{section:definitions}. We describe our algorithm for {\sc spa-st} under super-stability in Section \ref{section:algorithm}. Further, Section \ref{section:algorithm} also presents our algorithm's correctness results and some structural properties satisfied by the set of super-stable matchings in an instance of {\sc spa-st}. In Section \ref{emprical-results}, we present the experimental results obtained from our algorithm's empirical evaluation.  Finally, Section \ref{section:conclusions} presents some concluding remarks and potential direction for future work.
\section{Preliminary definitions and results}
\label{section:definitions}
\subsection{Formal definition of {\footnotesize SPA-S}}
\label{subsection:spa-s}
An instance $I$ of {\sc spa-s} involves a set $\mathcal{S} = \{s_1 , s_2, \ldots , s_{n_1}\}$ of \emph{students}, a set $\mathcal{P} = \{p_1 , p_2, \ldots , p_{n_2}\}$ of \emph{projects} and a set $\mathcal{L} = \{l_1 , l_2, \ldots , l_{n_3}\}$ of \emph{lecturers}. Each student $s_i$ ranks a subset of $\mathcal{P}$ in strict order, which forms $s_i$'s preference list. We say that $s_i$ finds $p_j$ \emph{acceptable} if $p_j$ is in $s_i$'s preference list, and we denote by $A_i$ the set of projects that $s_i$ finds acceptable. Each lecturer $l_k \in \mathcal{L}$ offers a non-empty set of projects $P_k$, where $P_1, P_2, \ldots,$ $P_{n_3}$ partitions $\mathcal{P}$. Also, $l_k$ ranks in strict order of preference those students who find at least one project in $P_k$ acceptable, which forms $l_k$'s preference list. We say that $l_k$ finds $s_i$ \textit{acceptable} if $s_i$ is in $l_k$'s preference list, and we denote by $\mathcal{L}_k$ the set of students that $l_k$ finds acceptable. 

For any pair $(s_i, p_j) \in \mathcal{S} \times \mathcal{P}$, where $p_j$ is offered by $l_k$, we refer to $(s_i, p_j)$ as an \textit{acceptable pair} if $s_i$ and $l_k$ both find each other acceptable, i.e., if $p_j \in A_i$ and $s_i \in \mathcal{L}_k$. Each project $p_j \in \mathcal{P}$ has a capacity $c_j \in \mathbb{Z}^+$ indicating the maximum number of students that can be assigned to $p_j$. Similarly, each lecturer $l_k \in \mathcal{L}$ has a capacity $d_k \in \mathbb{Z}^+$ indicating the maximum number of students that $l_k$ is willing to supervise. We assume that for any lecturer $l_k$,

$$\max\{c_j: p_j \in P_k\} \leq d_k \leq \sum \{c_j: p_j \in P_k\},$$ 
\noindent
i.e., the capacity of $l_k$ is (i) at least the highest capacity of the projects offered by $l_k$, and (ii) at most the sum of the capacities of all the projects $l_k$ is offering. We denote by $\mathcal{L}_k^j$, the \emph{projected preference list} of lecturer $l_k$ for $p_j$, which can be obtained from $\mathcal{L}_k$ by removing those students that do not find $p_j$ acceptable (thereby retaining the order of the remaining students from $\mathcal{L}_k$).

An \emph{assignment} $M$ is a subset of $\mathcal{S} \times \mathcal{P}$ such that $(s_i, p_j) \in M$ implies that $s_i$ finds $p_j$ acceptable. If $(s_i, p_j) \in M$, we say that $s_i$ \emph{is assigned to} $p_j$, and $p_j$ \emph{is assigned} $s_i$. For convenience, if $s_i$ is assigned in $M$ to $p_j$, where $p_j$ is offered by $l_k$, we may also say that $s_i$ \emph{is assigned to} $l_k$, and $l_k$ \emph{is assigned} $s_i$.

For any student $s_i \in \mathcal{S}$, we let $M(s_i)$ denote the set of projects that are assigned to $s_i$ in $M$.  For any project $p_j \in \mathcal{P}$, we denote by $M(p_j)$ the set of students that are assigned to $p_j$ in $M$. Project $p_j$ is \emph{undersubscribed}, \emph{full} or \emph{oversubscribed} in $M$ according as $|M(p_j)|$ is less than, equal to, or greater than $c_j$, respectively. Similarly, for any lecturer $l_k \in \mathcal{L}$, we denote by $M(l_k)$ the set of students that are assigned to $l_k$ in $M$. Lecturer $l_k$ is \emph{undersubscribed}, \emph{full} or \emph{oversubscribed} in $M$ according as $|M(l_k)|$ is less than, equal to, or greater than $d_k$, respectively. 

A \emph{matching} $M$ is an assignment such that each student is assigned to at most one project in $M$, each project is assigned at most $c_j$ students in $M$, and each lecturer is assigned at most $d_k$ students in $M$ (i.e., $|M(s_i)| \leq 1$ for each $s_i \in \mathcal{S}$, $|M(p_j)| \leq c_j$ for each $p_j \in \mathcal{P}$, and $|M(l_k)| \leq d_k$ for each $l_k \in \mathcal{L}$). If $s_i$ is assigned to some project in $M$, for convenience we let $M(s_i)$ denote that project. In what follows, $l_k$ is the lecturer who offers project $p_j$.

\begin{definition}[Stability]
\label{def:stability}
Let $I$ be an instance of {\sc spa-st}, and let $M$ be a matching in $I$. We say that $M$ is \emph{stable} if it admits no blocking pair, where a \emph{blocking pair} is an acceptable pair $(s_i, p_j) \in (\mathcal{S} \times \mathcal{P}) \setminus M$ such that (a) and (b) holds as follows:
\begin{enumerate}[(a)]
    \item either $s_i$ is unassigned in $M$ or $s_i$ prefers $p_j$ to $M(s_i)$;
    \item either (i), (ii) or (iii) holds as follows:
    \begin{enumerate} [(i)]
\item each of $p_j$ and $l_k$ is undersubscribed in $M$;
\item $p_j$ is undersubscribed in $M$, $l_k$ is full in $M$ and either 

\begin{enumerate}[(1)]
    \item $s_i \in M(l_k)$, or
    \item $l_k$  prefers $s_i$ to the worst student in $M(l_k)$;
\end{enumerate}
\item $p_j$ is full in $M$ and $l_k$  prefers $s_i$ to the worst student in $M(p_j)$.
\end{enumerate}
\end{enumerate}

\end{definition}
To find a stable matching in an instance of {\sc spa-s}, two linear-time algorithms were described in \cite{AIM07}. The stable matching produced by the first algorithm is \emph{student-optimal} (i.e., each assigned student has the best-possible project that she could obtain in any stable matching) while the one produced by the second algorithm is \emph{lecturer-optimal} (i.e., each lecturer has the best set of students that she could obtain in any stable matching). The set of stable matchings in a given instance of {\sc spa-s} satisfy several interesting properties that together form what we will call the \emph{Unpopular Projects Theorem} (analogous to the Rural Hospitals Theorem for {\scriptsize HR} \cite{IMS00}), which we state as follows.
\begin{theorem}[\cite{AIM07}]
\label{thrm:rural-spa-s}
For a given instance of {\sc spa-s}, the following holds:
\begin{enumerate}
\item each lecturer is assigned the same number of students in all stable matchings;
\item exactly the same students are unassigned in all stable matchings;
\item a project offered by an undersubscribed lecturer is assigned the same number of students in all stable matchings.
\end{enumerate}
\end{theorem}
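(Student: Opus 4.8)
The plan is to prove all three parts together by comparing an arbitrary stable matching $M$ with the student-optimal stable matching $M^+$ and the lecturer-optimal stable matching $M^-$, whose existence is guaranteed by the two algorithms of \cite{AIM07}. Throughout I would adopt the convention (implicit in the individual rationality of stable matchings) that, for each student, being assigned to any acceptable project is preferable to being unassigned, so that the phrase ``best project obtainable'' ranks assignment above non-assignment.

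First I would record the two set inclusions that follow directly from optimality. Since $M^+$ is student-optimal, every student does at least as well in $M^+$ as in any stable matching $M$; hence if $s_i$ is assigned in $M$ then $s_i$ is assigned in $M^+$, so the set of students assigned in $M$ is contained in the set assigned in $M^+$. Dually, I would first establish, by an argument symmetric to the optimality proof in \cite{AIM07}, that $M^-$ is student-\emph{pessimal}, from which the reverse inclusion follows: the set of students assigned in $M^-$ is contained in the set assigned in any stable $M$. At this point every stable matching assigns a set of students sandwiched between those of $M^-$ and $M^+$, so Part~2 reduces to the single numerical claim that $M^-$ and $M^+$ assign equally many students.

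The core of the proof, and the engine for Part~1, is the counting lemma that for any two stable matchings $M$ and $M'$, each lecturer $l_k$ satisfies $|M(l_k)| = |M'(l_k)|$. I would approach this through the structural fact that a lecturer with spare capacity cannot be snubbed: if $l_k$ is undersubscribed in $M$ and some student $s_i$ who is acceptable to an undersubscribed $p_j \in P_k$ is either unassigned in $M$ or assigned to a project she ranks below $p_j$, then $(s_i, p_j)$ blocks $M$ via condition (i) of Definition~\ref{def:stability}. Using this observation together with the sandwich from the previous paragraph, the total counts $\sum_k |M(l_k)|$ must agree, and a local exchange argument then forces agreement lecturer-by-lecturer; this simultaneously closes the numerical gap needed for Part~2 (nested assigned sets of equal size must coincide) and delivers Part~1. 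Part~3 then follows: a lecturer undersubscribed in one stable matching is, by Part~1, undersubscribed in all of them, and the no-snubbing property pins down the number of students on each of its projects, since any project of $l_k$ gaining or losing a student between two stable matchings would expose a blocking pair through condition (i) or (iii).

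I expect the counting lemma to be the main obstacle. The naive local argument---take $l_k$ with $|M(l_k)| > |M'(l_k)|$, pick $s_i \in M(l_k) \setminus M'(l_k)$, and try to block $M'$ with $(s_i, M(s_i))$---fails on its own, because $s_i$ may strictly prefer her $M'$-assignment to $M(s_i)$, so condition (a) of the blocking definition need not hold. Overcoming this requires the global sandwich to control where the ``extra'' students can reside, together with careful bookkeeping through the projected preference lists $\mathcal{L}_k^j$ to reconcile the project-level capacities $c_j$ with the lecturer-level capacity $d_k$. This two-level capacity structure is precisely what is absent in the Hospitals/Residents setting, and it is what makes the {\scriptsize SPA-S} version of the argument more delicate than the classical Rural Hospitals Theorem.
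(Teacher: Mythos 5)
A preliminary observation: the paper does not prove Theorem~\ref{thrm:rural-spa-s} at all. The result is imported from \cite{AIM07} (hence the citation in the theorem header) and is used downstream only as a black box, in Proposition~\ref{proposition3} and in the proof of Theorem~\ref{thrm:upt}. There is therefore no in-paper proof to compare yours against; your proposal has to stand on its own as a reconstruction of the argument that lives in \cite{AIM07}.

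Judged on its own, it has a genuine gap --- the one you yourself flag. Part~1 of the theorem \emph{is} your ``counting lemma'', and it is never proved. The inclusions you extract from optimality and pessimality give only containments of the sets of assigned students, hence the inequalities $\sum_{k} |M^-(l_k)| \leq \sum_{k} |M(l_k)| \leq \sum_{k} |M^+(l_k)|$, not equalities. When you then assert that ``the total counts $\sum_k |M(l_k)|$ must agree'', you are assuming exactly the numerical statement to which you earlier reduced Part~2, so the reasoning is circular; and the ``local exchange argument'' that is supposed to upgrade agreement of totals to agreement lecturer-by-lecturer is left unspecified --- indeed you explain, correctly, why the obvious exchange fails (the displaced student may strictly prefer her other assignment, so condition (a) of Definition~\ref{def:stability} need not hold). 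The same objection defeats Part~3 as you state it: a student in $M(p_j) \setminus M'(p_j)$ need not form a blocking pair with $p_j$ for $M'$, for precisely the same reason, so ``gaining or losing a student exposes a blocking pair'' is not immediate. Two secondary issues: student-pessimality of the lecturer-optimal matching $M^-$ is asserted ``by an argument symmetric to the optimality proof'', but in {\scriptsize SPA-S} lecturers receive \emph{sets} of students under two levels of capacity constraints, so this duality is itself a claim needing proof; and, as it happens, it is not even needed, since one inclusion together with equal cardinalities already forces the assigned sets to coincide. What is missing is the actual engine: a self-contained comparison of two arbitrary stable matchings that establishes Part~1 without presupposing Part~2, which is precisely the content one would have to reproduce from \cite{AIM07}.
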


As we will see later in this paper, when ties are present in the preference lists of students and lecturers, the set of super-stable matchings also satisfy each of the properties in Theorem \ref{thrm:rural-spa-s}.

\subsection{Ties in the preference lists}
\label{subsection:spa-st}
We now define formally the generalisation of {\sc spa-s} in which the preference lists can include ties. In the preference list of lecturer $l_k\in \mathcal{L}$, a set $T$ of $r$ students forms a \emph{tie of length $r$} if $l_k$ does not prefer $s_i$ to $s_{i'}$ for any $s_i, s_{i'} \in T$ (i.e., $l_k$ is \emph{indifferent} between $s_i$ and $s_{i'}$). A tie in a student's preference list is defined similarly. For convenience, henceforth, we consider a non-tied entry in a preference list as a tie of length one. We denote by {\sc spa-st} the generalisation of {\sc spa-s} in which the preference list of each student (respectively lecturer) comprises a strict ranking of ties, each comprising one or more projects (respectively students). 

An example {\sc spa-st} instance $I_1$ is given in Fig.~\ref{fig:spa-st-instance-1}, which involves the set of students $\mathcal{S} = \{s_1, s_2, s_3, s_4, \\ s_5\}$, the set of projects $\mathcal{P} = \{p_1, p_2, p_3\}$ and the set of lecturers $\mathcal{L} = \{l_1, l_2\}$, with $P_1 = \{p_1, p_2\}$ and $P_2 = \{p_3\}$.  Ties in the preference lists are indicated by round brackets. 

\begin{figure}[t]
\centering
\small
\begin{tabular}{llll}
\hline
Students' preferences & \qquad \qquad  & Lecturers' preferences & offers\\ 
$s_1$: \;  $p_1$  &  & $l_1$: \; $s_5$ \;($s_1$ \; $s_2$) \; $s_3$ \; $s_4$ & $p_1$, $p_2$\\ 
$s_2$: \;($p_1$ \; $p_3$)  &  & $l_2$: \;  $s_4$ \; $s_5$ \; $s_2$ & $p_3$\\ 
$s_3$: \; $p_2$ &  & &\\
$s_4$: \; $p_2$ \; $p_3$ &  & Project capacities: $c_1 = c_3 = 1, \; c_2 = 2$& \\
$s_5$: \; $p_3$ \; $p_1$ &  & Lecturer capacities: $d_1 = 2, \; d_2 = 1$&\\ 
\hline
\end{tabular}
\caption{\label{fig:spa-st-instance-1} \small An example instance $I_1$ of {\sc spa-st}.}
\end{figure}

In the context of {\sc spa-st}, we assume that all notation and terminology carries over from Section \ref{subsection:spa-s} as defined for {\sc spa-s} with the exception of stability, which we now define. When ties appear in the preference lists, three levels of stability arise (as in the {\sc hrt} context \cite{IMS00,IMS03}), namely \emph{weak stability, strong stability and super-stability}. The formal definition for weak stability in {\sc spa-st} follows from the definition for stability in {\sc spa-s} (see Definition \ref{def:stability}). Moreover, the existence of a weakly stable matching in an instance $I$ of {\sc spa-st} is guaranteed by breaking the ties in $I$ arbitrarily, thus giving rise to an instance $I'$ of {\sc spa-s}. Clearly, a stable matching in $I'$ is weakly stable in $I$. Indeed a converse of sorts holds, which gives rise to the following proposition.

\begin{restatable}[]{proposition}{weakstability}
\label{proposition2}
Let $I$ be an instance of {\sc spa-st}, and let $M$ be a matching in $I$. Then $M$ is weakly stable in $I$ if and only if $M$ is stable in some instance $I'$ of {\sc spa-s} obtained from $I$ by breaking the ties in some way.
\end{restatable}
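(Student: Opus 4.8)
The plan is to prove both directions of the biconditional by reducing weak stability in {\scriptsize SPA-ST} to ordinary stability in a derived {\scriptsize SPA-S} instance, exploiting the fact—recorded just before the statement—that the definition of weak stability in {\scriptsize SPA-ST} is syntactically identical to Definition~\ref{def:stability} once we interpret ``prefers'' and ``worst student'' correctly in the presence of ties. The key observation throughout is that breaking ties only refines the preference relation: if $I'$ is obtained from $I$ by breaking ties, then whenever an agent strictly prefers one alternative to another in $I$, it continues to strictly prefer them in $I'$, and agents that are tied in $I$ become strictly ordered in $I'$.

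For the ``if'' direction (the easier one), I would fix an instance $I'$ of {\scriptsize SPA-S} obtained from $I$ by breaking ties, and suppose $M$ is stable in $I'$; I want to show $M$ is weakly stable in $I$. I argue by contraposition: assume $(s_i,p_j)$ is a blocking pair for weak stability of $M$ in $I$, and show it blocks $M$ in $I'$ as well. Since weak-stability blocking requires each agent to \emph{strictly} improve (condition~(i) of the informal trichotomy), every strict preference used in the blocking conditions of Definition~\ref{def:stability}—namely ``$s_i$ prefers $p_j$ to $M(s_i)$'' and ``$l_k$ prefers $s_i$ to the worst student in $M(p_j)$ (or $M(l_k)$)''—is a strict preference in $I$ and is therefore preserved when ties are broken to form $I'$. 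The subscribed/full/undersubscribed status of each project and lecturer depends only on $M$ and the capacities, which are unchanged, so the same case among (i)--(iii) applies. Hence $(s_i,p_j)$ blocks $M$ in $I'$, contradicting stability.

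For the ``only if'' direction I must, given that $M$ is weakly stable in $I$, \emph{construct} a specific tie-breaking $I'$ in which $M$ is stable—here the difficulty is that an arbitrary tie-breaking need not work, so the tie-breaking must be tailored to $M$. The natural construction is to break every tie so that agents assigned in $M$ are promoted: concretely, within each tie in a student's list, place the project $M(s_i)$ (if it lies in that tie) first, and within each tie in a lecturer $l_k$'s list, order the students assigned to $l_k$ in $M$ ahead of the others in the same tie (the internal order among the tied agents can be arbitrary). I would then show, again by contraposition, that any blocking pair $(s_i,p_j)$ of $M$ in $I'$ gives a weak-stability blocking pair in $I$: each strict preference asserted in $I'$ either already held strictly in $I$, or arose from breaking a tie—but the promotion rule guarantees that a strict preference $I'$-created by tie-breaking cannot point ``toward'' a potential blocking partner. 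For instance, if $s_i$ is assigned in $M$ and $I'$-prefers $p_j$ to $M(s_i)$, then since $M(s_i)$ was promoted to the front of its tie, $s_i$ must already have strictly preferred $p_j$ to $M(s_i)$ in $I$; the analogous argument handles the lecturer's comparison against the worst student in $M(p_j)$ or $M(l_k)$, using that the members of $M(l_k)$ were promoted within their ties. Thus the $I'$-blocking pair is a genuine strict blocking pair in $I$, contradicting weak stability of $M$.

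I expect the main obstacle to be the careful case analysis in the ``only if'' direction ensuring the promotion-based tie-breaking neutralises every way a block could arise, particularly reconciling the project-level full/undersubscribed conditions with the lecturer-level comparisons in cases~(ii) and~(iii) of Definition~\ref{def:stability}, where one must simultaneously track $M(p_j)$, $M(l_k)$, and how the tie-breaking affects ``the worst student'' in each set. The bookkeeping is routine once the promotion rule is fixed, but it is the step most prone to a gap, so I would state the tie-breaking rule precisely before entering the case analysis.
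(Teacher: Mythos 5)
Your proposal is correct and follows essentially the same route as the paper: the paper's proof also breaks ties by promoting $M(s_i)$ within its tie in each student's list and promoting the members of $M(l_k)$ within ties in each lecturer's list, then derives a contradiction from any blocking pair in $I'$, and handles the other direction by noting that the strict preferences in a weak-stability blocking pair survive any tie-breaking. Your write-up is, if anything, slightly more explicit than the paper's about why the promotion rule neutralises tie-broken comparisons.
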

\begin{proof}
Let $I$ be an instance of {\sc spa-st} and let $M$ be a matching in $I$. Suppose that $M$ is weakly stable in $I$. Let $I'$ be an instance of {\sc spa-s} obtained from $I$ by breaking the ties in the following way. For each student $s_i$ in $I$ such that the preference list of $s_i$ includes a tie $T$ containing two or more projects, we order the preference list of $s_i$ in $I'$ as follows: if $s_i$ is assigned in $M$ to a project $p_j$ in $T$ then $s_i$ prefers $p_j$ to every other project in $T$; otherwise, we order the projects in $T$ arbitrarily. For each lecturer $l_k$ in $I$ such that $l_k$'s preference list includes a tie $X$, if $X$ contains students that are assigned to $l_k$ in $M$ and students that are not assigned to $l_k$ in $M$ then $l_k$'s preference list in $I'$ is ordered in such a way that each $s_i\in X\cap M(l_k)$ is preferred to each $s_{i'}\in X\setminus  M(l_k)$; otherwise, we order the students in $X$ arbitrarily. Now, suppose $(s_i, p_j)$ forms a blocking pair for $M$ in $I'$.  Given how the ties in $I$ were removed to obtain $I'$, this implies that $(s_i, p_j)$ forms a blocking pair for $M$ in $I$, a contradiction to our assumption that $M$ is weakly stable in $I$. Thus $M$ is stable in $I'$.

Conversely, suppose $M$ is stable in some instance $I'$ of {\sc spa-s} obtained from $I$ by breaking the ties in some way. Now suppose that $M$ is not weakly stable in $I$.  Then some pair $(s_i, p_j)$ forms a blocking pair for $M$ in $I$.  It is then clear from the definition of weak stability and from the construction of $I'$ that $(s_i, p_j)$ is a blocking pair for $M$ in $I'$, a contradiction. 
\qed \end{proof}
\noindent
As mentioned earlier, super-stability is the most robust concept to seek. Only if no super-stable matching exists in the underlying problem instance should other forms of stability be sought in a practical setting. Thus, for the remainder of this paper, we focus on super-stability in the {\sc spa-st} context.

\begin{definition}[Super-stability]
\label{definition:super-stability}
Let $I$ be an instance of {\sc spa-st}, and let $M$ be a matching in $I$. We say that $M$ is \emph{super-stable} if it admits no blocking pair, where a \emph{blocking pair} is an acceptable pair $(s_i, p_j) \in (\mathcal{S} \times \mathcal{P}) \setminus M$ such that (a) and (b) holds as follows:
\begin{enumerate}[(a)]
    \item either $s_i$ is unassigned in $M$ or $s_i$ prefers $p_j$ to $M(s_i)$ or is indifferent between them;
    \item either (i), (ii), or (iii) holds as follows:
    \begin{enumerate} [(i)]
\item each of $p_j$ and $l_k$ is undersubscribed in $M$;
\item $p_j$ is undersubscribed in $M$, $l_k$ is full in $M$ and either 

\begin{enumerate}[(1)]
    \item $s_i \in M(l_k)$, or
    \item $l_k$  prefers $s_i$ to the worst student/s in $M(l_k)$ or is indifferent between them;
\end{enumerate}
\item $p_j$ is full in $M$ and $l_k$  prefers $s_i$ to the worst student/s in $M(p_j)$ or is indifferent between them.
\end{enumerate}
\end{enumerate}

\end{definition}
It may be verified that the matching $M = \{(s_3, p_2), (s_4, p_3), (s_5, p_1)\}$ is super-stable in Fig.~\ref{fig:spa-st-instance-1}. Clearly, a super-stable matching is also weakly stable. Moreover, the super-stability definition gives rise to Proposition \ref{proposition1}, which can be regarded as an analogue of Proposition \ref{proposition2} for super-stability, restated as follows.
\superstability*

\begin{proof}
Let $I$ be an instance of {\sc spa-st} and let $M$ be a matching in $I$. Suppose that $M$ is super-stable in $I$. We want to show that $M$ is stable in every instance of {\sc spa-s} obtained from $I$ by breaking the ties in some way. Now, let $I'$ be an arbitrary instance of {\sc spa-s} obtained from $I$ by breaking the ties in some way, and suppose $M$ is not stable in $I'$. This implies that $M$ admits a blocking pair $(s_i, p_j)$ in $I'$. Since $I'$ is an arbitrary {\sc spa-s} instance obtained from $I$ by breaking the ties in some way, it follows that in $I$: (i) if $s_i$ is assigned in $M$ then $s_i$ either prefers $p_j$ to $M(s_i)$ or is indifferent between them, (ii) if $p_j$ is full in $M$ then $l_k$ either prefers $s_i$ to a worst student in $M(p_j)$ or is indifferent between them, and (iii) if $l_k$ is full in $M$ then either $s_i \in M(l_k)$ or $l_k$ prefers $s_i$ to a worst student in $M(l_k)$ or is indifferent between them. This implies that $(s_i, p_j)$ forms a blocking pair for $M$ in $I$, a contradiction to the super-stability of $M$. 

Conversely, suppose $M$ is stable in every instance of {\sc spa-s} obtained from $I$ by breaking the ties in some way. Now suppose $M$ is not super-stable in $I$. This implies that $M$ admits a blocking pair $(s_i, p_j)$ in $I$. We construct an instance $I'$ of {\sc spa-s} from $I$ by breaking the ties in the following way: (i) if $s_i$ is assigned in $M$ and $s_i$ is indifferent between $p_j$ and $M(s_i)$ in $I$ then $s_i$ prefers $p_j$ to $M(s_i)$ in $I'$; otherwise we break the ties in $s_i$'s preference list arbitrarily, and (ii) if some student, say $s_{i'}$, different from $s_i$ is assigned to $l_k$ in $M$ such that $l_k$ is indifferent between $s_i$ and $s_{i'}$ in $I$ then $l_k$ prefers $s_i$ to $s_{i'}$ in $I'$; otherwise we break the ties in $l_k$'s preference list arbitrarily. Thus $(s_i, p_j)$ forms a blocking pair for $M$ in $I'$, i.e., $M$ is not stable in $I'$, a contradiction to the fact that $M$ is stable in every instance of {\sc spa-s} obtained from $I$ by breaking the ties in some way.
\qed\end{proof}
\noindent 
The following proposition, which is a consequence of Propositions \ref{proposition1} and \ref{proposition2}, and Theorem \ref{thrm:rural-spa-s}, tells us that if a super-stable matching $M$ exists in $I$ then all weakly stable matchings in $I$ are of the same size (equal to the size of $M$) and match exactly the same set of students.
\begin{restatable}[]{proposition}{allinone}
\label{proposition3}
Let $I$ be an instance of {\sc spa-st}, and suppose that $I$ admits a super-stable matching $M$. Then the Unpopular Projects Theorem holds for the set of weakly stable matchings in $I$.
\end{restatable}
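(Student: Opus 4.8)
The plan is to reduce the claim to the Unpopular Projects Theorem for {\sc spa-s} (Theorem~\ref{thrm:rural-spa-s}) by exploiting the two characterisations already established: by Proposition~\ref{proposition1}, super-stability of $M$ in $I$ means that $M$ is stable in \emph{every} {\sc spa-s} instance obtained from $I$ by breaking the ties, whereas by Proposition~\ref{proposition2}, weak stability of a matching in $I$ means that it is stable in \emph{some} such instance. The key observation is that these quantifiers interlock. Given any weakly stable matching $M'$ in $I$, let $I'$ be a {\sc spa-s} instance obtained from $I$ by breaking the ties in which $M'$ is stable (such an $I'$ exists by Proposition~\ref{proposition2}). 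Since $M$ is super-stable, Proposition~\ref{proposition1} guarantees that $M$ is stable in this very same $I'$. Hence $M$ and $M'$ are simultaneously stable matchings in the {\sc spa-s} instance $I'$, and Theorem~\ref{thrm:rural-spa-s} applies to the pair $(M,M')$.

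First I would fix the super-stable matching $M$ and, for each weakly stable matching $M'$, produce the associated tie-broken instance $I'$ as above. Applying parts 1 and 2 of Theorem~\ref{thrm:rural-spa-s} to the stable matchings $M$ and $M'$ in $I'$ immediately yields that $M$ and $M'$ assign the same number of students to each lecturer, and leave exactly the same set of students unassigned. Since $M'$ was an arbitrary weakly stable matching, every weakly stable matching agrees with $M$ (and hence, by transitivity, with every other weakly stable matching) on these two quantities; this establishes parts 1 and 2 of the Unpopular Projects Theorem for the set of weakly stable matchings in $I$.

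Part 3 requires a little more care, since its statement refers to a lecturer that is \emph{undersubscribed}, and I must check that this notion is well-defined over the weakly stable matchings and is preserved when passing to $I'$. Having established part 1, each lecturer is assigned a fixed number of students across all weakly stable matchings, so the phrase ``undersubscribed lecturer'' is unambiguous and coincides with being undersubscribed in $M$. Moreover, breaking ties alters neither project nor lecturer capacities, so if $l_k$ is undersubscribed in $M$ then $l_k$ is undersubscribed in $I'$ as well. Part 3 of Theorem~\ref{thrm:rural-spa-s} applied to $M$ and $M'$ in $I'$ then gives that each project offered by such a lecturer receives the same number of students in $M$ and in $M'$; as $M'$ ranges over all weakly stable matchings, part 3 of the Unpopular Projects Theorem follows for $I$.

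The main obstacle here is not a hard calculation but the correct marshalling of the quantifiers: a different tie-breaking $I'$ may be required for each weakly stable $M'$, and the whole argument works only because the fixed super-stable $M$ survives as a stable matching in \emph{all} of these instances simultaneously. Once this is recognised, the only remaining point needing attention is checking that ``undersubscribed'' transfers cleanly between $I$ and $I'$, which follows from the invariance of capacities under tie-breaking together with part 1 of the theorem.
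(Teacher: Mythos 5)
Your proof is correct and takes exactly the route the paper intends: for each weakly stable matching $M'$ you invoke Proposition~\ref{proposition2} to obtain a tie-broken {\sc spa-s} instance $I'$ in which $M'$ is stable, invoke Proposition~\ref{proposition1} to conclude that the fixed super-stable matching $M$ is stable in that same $I'$, and then apply Theorem~\ref{thrm:rural-spa-s} to the pair, letting $M$ serve as the common reference point across all weakly stable matchings. This is precisely the argument the paper has in mind when it states Proposition~\ref{proposition3} as a consequence of Propositions~\ref{proposition1} and~\ref{proposition2} and Theorem~\ref{thrm:rural-spa-s}, and your careful handling of the quantifiers and of the well-definedness of ``undersubscribed lecturer'' fills in the details correctly.
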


\begin{proof}
Let $I$ be an instance of {\sc spa-st}. Let $M$ be a super-stable matching in $I$ and let $M'$ be a weakly stable matching in $I$. Then by Proposition \ref{proposition2}, $M'$ is stable in some instance $I'$ of {\sc spa-s} obtained from $I$ by breaking the ties in some way. Also $M$ is stable in $I'$ by Proposition \ref{proposition1}. By Theorem \ref{thrm:rural-spa-s}, each lecturer is assigned the same number of students in $M$ and $M'$, exactly the same students are unassigned in $M$ and $M'$, and a project offered by an undersubscribed lecturer is assigned the same number of students in $M$ and $M'$. Hence, the Unpopular Projects Theorem holds for the set of weakly stable matchings in $I$. \qed
\end{proof}

\subsection{Cloning from {\sc spa-st} to {\sc hrt} does not work in general}
\label{subsect:cloning}
As mentioned earlier, Irving \textit{et al.}~\cite{IMS00} described a polynomial-time algorithm to find a super-stable matching or report that no such matching exists, given an instance of {\sc hrt}. The authors referred to their algorithm as Algorithm {\sf HRT-Super-Res}. One might assume that reducing a given instance of {\sc spa-st} to an instance of {\sc hrt} (using a ``cloning'' technique) and subsequently applying Algorithm {\sf HRT-Super-Res} to the resulting instance would solve our problem. However, this is not always true. In what follows, we describe an obvious method to clone an instance of {\sc spa-st} to an instance of {\sc hrt}, and we show that applying the super-stable matching algorithm described in \cite{IMS00} to the resulting {\sc hrt} instance does not work in general.

A method to derive an instance $I'$ of {\sc hrt} from an instance $I$ of {\sc spa-st} was described by Cooper and Manlove \cite{CM18a}. We explain this method as follows. The students and projects involved in $I$ are converted into residents and hospitals respectively in $I'$, i.e., each $s_i \in \mathcal{S}$ becomes $r_i$ in the cloned instance, and each $p_j \in \mathcal{P}$ becomes $h_j$. Residents inherit their preference lists naturally from students, i.e., if $r_i$ corresponds to $s_i$ then the preference list of $r_i$ in $I'$ is $A_i$, with each project in $A_i$ being replaced by the associated hospital. Hospitals inherit their preference lists from the projected preference list of the associated project according to the lecturer offering the project, i.e., if $p_j$ corresponds to $h_j$ (where $p_j$ is offered by $l_k$) then the preference list of $h_j$ in $I'$ is $\mathcal{L}_k^j$, with each student in $\mathcal{L}_k^j$ being replaced by the associated resident. Each hospital also inherits its capacity from the project, i.e., for each $h_j$ associated with $p_j$, the capacity of $h_j$ is $c_j$. 

Let $l_k$ be an arbitrary lecturer in $I$. In order to translate $l_k$'s capacity into the {\sc hrt} instance, we create $n$ \emph{dummy residents}\footnote{The dummy residents created for each hospital will offset the difference between the corresponding lecturer capacity and the total capacity of her proposed projects.} for each hospital $h_j$ corresponding to a project $p_j \in P_k$, where $n$ is the difference between the sum of the capacities of all the projects in $P_k$ and the capacity of $l_k$ (recall that $\sum_{p_j \in P_k} c_j \geq d_k$). The preference list for each of these dummy residents will be a single tie consisting of all the hospitals corresponding to a project in $P_k$. Further, the preference list for each hospital corresponding to a project in $P_k$ will include a tie in its first position consisting of all the dummy residents associated with $l_k$. 

Next, we describe how to map between matchings in $I$ and in $I'$. Let $M$ and $M'$ be a matching in $I$ and $I'$ respectively. Let $r_i$ be the resident associated with $s_i$ and let $h_j$ be the hospital associated with $p_j$. If $s_i$ is assigned in $M$ to project $p_j$, then $r_i$ is assigned in $M'$ to hospital $h_j$. To illustrate the cloning technique described above, we give an example instance $I$ of {\sc spa-st} in Fig.~\ref{fig:super-instance-2} as well as the corresponding cloned {\sc hrt} instance $I'$ in Fig.~\ref{fig:super-instance-2-cloned}. Also, we give an intuition as to why this technique will not work in general.  

\begin{figure}[t]
\centering
\renewcommand{\arraystretch}{1}
\begin{tabular}{llll}
\hline
Students' preferences & \qquad \qquad   & Lecturers' preferences &  offers \\ 
$s_1$:\; $p_1$  &  & $l_1$: \;$s_1$\;($s_2$ \; $s_3$)  & $p_1$, $p_2$\\ 
$s_2$:\;($p_1$ \; $p_2$)  &  & $l_2$: \;$s_3$ & $p_3$\\ 
$s_3$:\; $p_2$ \; $p_3$  &   & &\\  
 &  & Project capacities: $c_1 = c_2 = c_3 = 1$& \\
 &  & Lecturer capacities: $d_1 = d_2 = 1$&\\ 
\hline
\end{tabular}
\caption[]{\label{fig:super-instance-2} An instance $I$ of {\sc spa-st}.}
\end{figure}

\begin{figure}[t]
\centering
\mbox{} \\
\mbox{} \\
\renewcommand{\arraystretch}{1}
\begin{tabular}{lll}
\hline
Residents' preferences &  \qquad \qquad  & Hospitals' preferences\\ 
$r_1$: \; $h_1$  &  & $h_1$: \;$r_{d_1}$ \; $r_1$ \; $r_2$\\ 

$r_2$: \;($h_1$ \; $h_2$)    &  & $h_2$: \;$r_{d_1}$ \;($r_2$ \; $r_3$)\\  

$r_3$: \; $h_2$ \; $h_3$ &  &$h_3$:\; $r_3$\\  
$r_{d_1}$:\;($h_1$ \; $h_2$)  &  &\\  
 &  & Hospital capacities: $c_1 = c_2 = c_3 = 1$\\  
\hline
\end{tabular}
\caption[]{\label{fig:super-instance-2-cloned} An instance $I'$ of {\sc hrt} cloned from the {\sc spa-st} instance illustrated in Fig.~\ref{fig:super-instance-2}.}

\end{figure}

With respect to Figs.~\ref{fig:super-instance-2} and \ref{fig:super-instance-2-cloned}, each resident $r_1, r_2$ and $r_3$ in $I'$ corresponds to student $s_1, s_2$ and $s_3$ in $I$, respectively; and the preference list of each resident is adapted from the preference list of the associated student. Also, each hospital $h_1, h_2$ and $h_3$ in $I'$ corresponds to project $p_1, p_2$ and $p_3$ in $I$, respectively. The preference list of hospitals $h_1$ and $h_2$ is $\mathcal{L}_1^1$ and $\mathcal{L}_1^2$ respectively, since $l_1$ is the lecturer that offers both $p_1$ and $p_2$. Similarly, the preference list of hospital $h_3$ is $\mathcal{L}_2^3$, since $l_2$ is the lecturer that offers $p_3$. Further, for lecturer $l_1$ who offers both $p_1$ and $p_2$, since $c_1 + c_2 = 2 > 1 = d_1$, we add one dummy resident $r_{d_1}$ to the cloned instance. The preference list of $r_{d_1}$ is a single tie consisting of $h_1$ and $h_2$; and the preference list of both $h_1$ and $h_2$ includes $r_{d_1}$ in first position.

The reader can easily verify that matching $M = \{(s_1, p_1), (s_3, p_3)\}$ is super-stable in the {\sc spa-st} instance $I$ illustrated in Fig.~\ref{fig:super-instance-2}. Now, following our description of how to map between matchings in $I$ and in $I'$, a matching in $I'$ is $M' = \{(r_{d_1}, h_2), (r_1, h_1), (r_3, h_3)\}$, with $(s_1, p_1) \in M$ corresponding to $(r_1, h_1) \in M'$ and  $(s_3, p_3) \in M$ corresponding to $(r_3, h_3) \in M'$. Clearly, $M'$ is not super-stable in $I'$ as $(r_{d_1}, h_1)$ forms a blocking pair. In fact, the {\sc hrt} instance $I'$ admits no super-stable matching. The justification for this is as follows: irrespective of the hospital that the dummy resident $r_{d_1}$ is assigned to in any matching obtained from $I'$, $r_{d_1}$ will block this matching via the other hospitals tied in her preference list (since the hospital would be better off taking on $r_{d_1}$, and $r_{d_1}$ would be no worse off).

One way to avoid this problem would be to strictly order the hospitals in $r_{d_1}$'s preference list; however, the order in which the hospitals appear will lead to different possibilities. For instance: if $r_{d_1}$ prefers $h_1$ to $h_2$, the reader can verify that the corresponding {\sc hrt} instance admits no super-stable matching; however, if $r_{d_1}$ prefers $h_2$ to $h_1$, again the reader can verify that the corresponding {\sc hrt} instance admits the super-stable matching $\{(r_{d_1}, h_2), (r_1, h_1), (r_3, h_3)\}$. The downside of this strategy is that there is no obvious reason as to why $r_{d_1}$ should prefer $h_2$ to $h_1$ in the cloned {\sc hrt} instance in Fig.~\ref{fig:super-instance-2-cloned} by merely looking at the original {\sc spa-st} instance in Fig.~\ref{fig:super-instance-2}. Hence, in order to make this technique work in general, we will need to generate every {\sc hrt} instance obtained by ordering the dummy residents' preference lists in some way. This is exponential in the problem instance.
\section{An algorithm for {\small SPA-ST} under super-stability}
\label{section:algorithm}

In this section we present our algorithm for {\sc spa-st} under super-stability, which we will refer to as Algorithm {\sf SPA-ST-super}. Before we proceed, we briefly describe Algorithm {\sf HRT-Super-Res} \cite{IMS00}. The algorithm involves a sequence of proposals from the residents to the hospitals. Each resident proposes in turn to all of the hospitals tied together at the head of her preference list, and all proposals are provisionally accepted. If a hospital $h$ becomes oversubscribed then none of $h$'s worst assignees nor any resident tied with these assignees in $h$'s preference list can be assigned to $h$ in any super-stable matching -- such pairs $(r, h)$ are deleted from each other's preference lists. If a hospital $h$ is full then no resident strictly worse than $h$'s worst assignees can be assigned to $h$ in any super-stable matching -- again such $(r,h)$ pairs are deleted from each other's preference lists. The proposal sequence terminates once every resident is either assigned to a hospital or has an empty preference list. At this point, if the constructed assignment of residents to hospitals is super-stable in the original {\sc hrt} instance then the assignment is returned as a super-stable matching. Otherwise, the algorithm reports that no super-stable matching exists.

We note that our algorithm is a non-trivial extension of Algorithm {\sf HRT-Super-Res} for {\sc hrt} \cite{IMS00}. Due to the more general setting of {\sc spa-st}, Algorithm {\sf SPA-ST-super} requires some new ideas (precisely lines 27-34 of the algorithm on page \pageref{algorithmSPA-STsuper}), and the proofs of the correctness results are more complex than for the aforementioned algorithm for {\sc hrt}. We give definitions relating to the algorithm in Section \ref{subsect:algorithm-definition}. We give a description of our algorithm in Section \ref{subsect:algorithm-description}, before presenting it in pseudocode form. In Section \ref{example-description}, we illustrate an execution of our algorithm with respect to an example {\sc spa-st} instance. We present the algorithm's correctness results in Section \ref{correctness-result}. Finally, in Section \ref{subsect:properties}, we show that the set of super-stable matchings in an instance of {\sc spa-st} satisfy analogous properties to those given in Theorem \ref{thrm:rural-spa-s}.
\subsection{Definitions relating to the algorithm}
\label{subsect:algorithm-definition}

First, we present some definitions relating to the algorithm. In what follows, $I$ is an instance of {\sc spa-st}, $(s_i, p_j)$ is an acceptable pair in $I$ and $l_k$ is the lecturer who offers $p_j$.  Further, if $(s_i,p_j)$ belongs to some super-stable matching in $I$, we call $(s_i, p_j)$ a \textit{super-stable pair}.

During the execution of the algorithm, students become \textit{provisionally assigned} to projects. It is possible for a project to be provisionally assigned a number of students that exceed its capacity. This holds analogously for a lecturer. The algorithm proceeds by deleting from the preference lists certain $(s_i, p_j)$ pairs that cannot be super-stable. By the term \textit{delete} $(s_i, p_j)$, we mean the removal of $p_j$ from $s_i$'s preference list and the removal of $s_i$ from $\mathcal{L}_k^j$ (the projected preference list of lecturer $l_k$ for $p_j$). In addition, if $s_i$ is provisionally assigned to $p_j$ at this point, we break the assignment.  If $s_i$ has been deleted from every projected preference list of $l_k$ that she originally belonged to, we will implicitly assume that $s_i$ has been deleted from $l_k$'s preference list.  By the \textit{head} of a student's preference list at a given point, we mean the set of one or more projects, tied in her preference list after any deletions might have occurred, that she prefers to all other projects in her list. 

For project $p_j$, we define the \textit{tail} of $\mathcal{L}_k^j$ as the least-preferred tie in $\mathcal{L}_k^j$ after any deletions might have occurred (recalling that a tie can be of length one). In the same fashion, we define the \textit{tail} of $\mathcal{L}_k$ (the preference list of lecturer $l_k$) as the least-preferred tie in $\mathcal{L}_k$ after any deletions might have occurred. If $s_i$ is provisionally assigned to $p_j$, we define the \textit{successors} of $s_i$ in $\mathcal{L}_{k}^j$ as those students that are worse than $s_i$ in $\mathcal{L}_{k}^j$. An analogous definition holds for the successors of $s_i$ in $\mathcal{L}_k$. 

\subsection{Description of the algorithm}
\label{subsect:algorithm-description}
We now describe our algorithm, shown in pseudocode form in Algorithm~\ref{algorithmSPA-STsuper}.  Algorithm {\sf SPA-ST-super} begins by initialising an empty set $M$ which will contain the provisional assignments of students to projects (and implicitly to lecturers). We remark that such assignments can subsequently be broken during the algorithm's execution. Also, each project is initially assigned to be empty (i.e., not assigned to any student).

The \texttt{while} loop of the algorithm involves each student $s_i$ who is not provisionally assigned to any project in $M$ and who has a non-empty preference list applying in turn to each project $p_j$ at the head of her list. Immediately, $s_i$ becomes provisionally assigned to $p_j$ in $M$ (and to $l_k$).
If, by gaining a new student, $p_j$ becomes oversubscribed, it turns out that none of the students $s_t$  at the tail of $\mathcal{L}_k^j$ can be assigned to $p_j$ in any super-stable matching -- such pairs $(s_t, p_j)$ are deleted. Similarly, if by gaining a new student, $l_k$ becomes oversubscribed, none of the students $s_t$ at the tail of $\mathcal{L}_k$ can be assigned to any project offered by $l_k$ in any super-stable matching  -- the pairs $(s_t, p_u)$, for each project $p_u \in P_k$ that $s_t$ finds acceptable, are deleted.

Regardless of whether any deletions occurred as a result of the two conditionals described in the previous paragraph, we have two further (possibly non-disjoint) cases in which deletions may occur. 
If $p_j$ becomes full, we let $s_r$ be any worst student provisionally assigned to $p_j$ (according to $\mathcal{L}_k^j$), and we delete $(s_t, p_j)$ for each successor $s_t$ of $s_r$ in $\mathcal{L}_k^j$. Similarly if $l_k$ becomes full, we let $s_r$ be any worst student provisionally assigned to $l_k$, and we delete $(s_t, p_u)$, for each successor $s_t$ of $s_r$ in $\mathcal{L}_k$ and for each project $p_u \in P_k$ that $s_t$ finds acceptable. As we will prove later, none of the (student, project) pairs that we delete is a super-stable pair.

At the point where the \texttt{while} loop terminates (i.e., when every student is provisionally assigned to one or more projects or has an empty preference list), if some project $p_j$ that was previously full ends up undersubscribed, we let $s_r$ be any one of the most-preferred students (according to $\mathcal{L}_k^j$) who was provisionally assigned to $p_j$ during some iteration of the algorithm but is not assigned to $p_j$ at this point (for convenience, we henceforth refer to such $s_r$ as the most-preferred student rejected from $p_j$ according to $\mathcal{L}_k^j$). If the students at the tail of $\mathcal{L}_k$ (recalling that the tail of $\mathcal{L}_k$ is the least-preferred tie in $\mathcal{L}_k$ after any deletions might have occurred) are no better than $s_r$, it turns out that none of these students $s_t$ can be assigned to any project offered by $l_k$ in any super-stable matching -- the pairs $(s_t, p_u)$, for each project $p_u \in P_k$ that $s_t$ finds acceptable, are deleted. The \texttt{while} loop is then potentially reactivated, and the entire process continues until every student is provisionally assigned to a project or has an empty preference list, at which point the \texttt{repeat-until} loop terminates.

Upon termination of the \texttt{repeat-until} loop, if the set $M$, containing the assignment of students to projects, is super-stable relative to the given instance $I$ then $M$ is output as a super-stable matching in $I$. Otherwise, the algorithm reports that no super-stable matching exists in $I$.
\begin{algorithm}[htbp]
\caption{Algorithm {\sf SPA-ST-super}}
\label{algorithmSPA-STsuper} 

\begin{algorithmic}[1]
\Require {{\sc spa-st} instance $I$} 
 \Ensure{a super-stable matching $M$ in $I$ or ``no super-stable matching exists in $I$''}
 
\State $M \gets \emptyset$
\ForEach {$p_j \in \mathcal{P}$}
\State  \texttt{full}($p_j$) = \texttt{false}
\EndFor
\Repeat{}
\While {some student $s_i$ is unassigned and has a non-empty preference list}
	\ForEach {project $p_j$ at the head of $s_i$'s preference list}
    	\State $l_k \gets $ lecturer who offers $p_j$
        \State /* $s_i$ applies to $p_j$ */
		\State $M \gets M \cup \{(s_i, p_j)\}$ /*provisionally assign $s_i$ to $p_j$ (and to $l_k$) */
        
		\If {$p_j$ is oversubscribed}
			\ForEach{student $s_t$ at the tail of $\mathcal{L}_{k}^{j}$}
				\State delete $(s_t, p_j)$ 

			\EndFor
		
		\ElsIf {$l_k$ is oversubscribed}
			\ForEach{student $s_t$ at the tail of $\mathcal{L}_{k}$}
				\ForEach {project $p_u \in  P_k \cap A_t$}
					\State delete $(s_t, p_u)$ 

				\EndFor
			\EndFor
			
		\EndIf
		
		\If {$p_j$ is full}
			\State \texttt{full}($p_j$) = \texttt{true}
			\State $s_r \gets $ worst student assigned to $p_j$ according to $\mathcal{L}_{k}^{j}$ \{any if $> 1$\}
			\ForEach{successor $s_t$ of $s_r$ on $\mathcal{L}_{k}^{j}$}
				 \State delete $(s_t, p_j)$
			\EndFor

		\EndIf
		
		\If {$l_k$ is full}
			\State $s_r \gets $ worst student assigned to $l_k$ according to $\mathcal{L}_{k}$ \{any if $> 1$\}
			\ForEach{successor $s_t$ of $s_r$ on $\mathcal{L}_{k}$}
				
				\ForEach{project $p_u \in P_k \cap A_t$ }
					 \State delete $(s_t, p_u)$ 
				\EndFor
			
			\EndFor

		\EndIf
	\EndFor
\EndWhile	
\ForEach{$p_j \in \mathcal{P}$}
\If {$p_j$ is undersubscribed and \texttt{full}($p_j$) is \texttt{true}}
 \State $l_k \gets $ lecturer who offers $p_j$
\State $s_r \gets $ most-preferred student rejected from $p_j$ according to $\mathcal{L}_{k}^{j}$ \{any if $> 1$\}
\If{the students at the tail of $\mathcal{L}_k$ are no better than $s_r$}
				 	\ForEach{student $s_t$ at the tail of $\mathcal{L}_k$}

						\ForEach{project $p_u \in P_k \cap A_t$ }
							\State delete $(s_t, p_u)$  \label{alg:deletion-outside}

						\EndFor
			
					\EndFor
				
	\EndIf			

		\EndIf	
\EndFor
\Until {every unassigned student has an empty preference list}
\If {$M$ is super-stable in $I$}
\State \Return $M$

\Else
 \State \Return ``no super-stable matching exists in $I$''

\EndIf
\end{algorithmic}
\end{algorithm}
\newpage
\subsection{Example algorithm execution}
\label{example-description}
We illustrate an  execution of Algorithm {\sf SPA-ST-super} with respect to the {\sc spa-st} instance shown in Fig.~\ref{fig:spa-st-instance-1} (page \pageref{fig:spa-st-instance-1}).  We initialise $M = \{\}$, which will contain the provisional assignment of students to projects. For each project $p_j \in \mathcal{P}$, we set \texttt{full}($p_j$) = \texttt{false} (\texttt{full}($p_j$) will be set to \texttt{true} when $p_j$ becomes full, so that we can easily identify any project that was full during an iteration of the algorithm and ended up undersubscribed). We assume that the students become provisionally assigned to each project at the head of their list in subscript order. Table~\ref{example-illustration} illustrates how this execution of Algorithm {\sf SPA-ST-super} proceeds with respect to $I_1$.
\begin{table}[htbp]
\caption{\label{example-illustration} \small An execution of Algorithm {\sf SPA-ST-super} with respect to Fig.~\ref{fig:spa-st-instance-1}.}
\centering \small
\setlength{\tabcolsep}{0.8em}
\renewcommand{\arraystretch}{1.7}
\begin{tabular}{p{1.6cm}p{2.4cm}p{10cm}}
\hline\noalign{\smallskip}
{\texttt while} loop iterations & Student applies to project & Consequence \\ 
\noalign{\smallskip}\hline\noalign{\smallskip}
$1$ & $s_1$ applies to $p_1$ & $M=\{(s_1, p_1)\}$. \texttt{full}($p_1$) = \texttt{true}. \\ 
\hline
$2$ & $s_2$ applies to $p_1$ & $M=\{(s_1, p_1), (s_2, p_1)\}$. $p_1$ becomes oversubscribed. The tail of $\mathcal{L}_1^1$ contains $s_1$ and $s_2$ -- thus we delete the pairs $(s_1, p_1)$ and $(s_2, p_1)$ (and we break the provisional assignments). \\ 

 & $s_2$ applies to $p_3$ &  $M=\{(s_2, p_3)\}$. \texttt{full}($p_3$) = \texttt{true}.\\
\hline
$3$ & $s_3$ applies to $p_2$ &  $M=\{(s_2, p_3), (s_3, p_2)\}$. \\
\hline
$4$ & $s_4$ applies to $p_2$ &  $M = \{(s_2, p_3), (s_3, p_2), (s_4, p_2)\}$. \texttt{full}($p_2$) = \texttt{true}.  \\
\hline
$5$ & $s_5$ applies to $p_3$ &  $M = \{(s_2, p_3), (s_3, p_2), (s_4, p_2), (s_5, p_3)\}$. $p_3$ becomes oversubscribed. The tail of $\mathcal{L}_2^3$ contains only $s_2$ -- thus we delete the pair $(s_2, p_3)$ (and we break the provisional assignment).\\
\hline
\multicolumn{3}{p{15.2cm}}{The first iteration of the \texttt{while} loop terminates since every unassigned student (i.e., $s_1$ and $s_2$) has an empty preference list. At this point, \texttt{full}($p_1$) is \texttt{true} and $p_1$ is undersubscribed. Moreover, the student at the tail of $\mathcal{L}_1$ (i.e., $s_4$) is no better than $s_1$, where $s_1$ was previously assigned to $p_1$ and $s_1$ is also the most-preferred student rejected from $p_1$ according to $\mathcal{L}_1^1$; thus we delete the pair $(s_4, p_2)$. The \texttt{while} loop is then reactivated.}\\
\hline
$6$ & $s_4$ applies to $p_3$ &  $M = \{(s_3, p_2), (s_5, p_3), (s_4, p_3)\}$. $p_3$ becomes oversubscribed. The tail of $\mathcal{L}_2^3$ contains only $s_5$ -- thus we delete the pair $(s_5, p_3)$.\\
\hline
$7$ & $s_5$ applies to $p_1$ &  $M = \{(s_3, p_2), (s_4, p_3), (s_5, p_1)\}$. \\
\hline
\multicolumn{3}{p{15.2cm}}{Again, every unassigned students has an empty preference list. We also have that \texttt{full}($p_2$) is \texttt{true} and $p_2$ is undersubscribed; however no further deletion is carried out in line 34 of the algorithm, since the student at the tail of $\mathcal{L}_1$ (i.e., $s_3$) is better than $s_4$, where $s_4$ was previously assigned to $p_2$ and $s_4$ is also the most-preferred student rejected from $p_2$ according to $\mathcal{L}_1^2$. Hence, the \texttt{repeat-until} loop terminates and the algorithm outputs $M = \{(s_3, p_2), (s_4, p_3), (s_5, p_1)\}$ as a super-stable matching. It is clear that $M$ is super-stable in the original instance $I_2$.}\\
\noalign{\smallskip}\hline
\end{tabular} 
\end{table}
\subsection{Correctness of Algorithm {\sf SPA-ST-super}}
\label{correctness-result}
We now present a series of results concerning the correctness of Algorithm {\sf SPA-ST-super}. The first of these results deals with the fact that no super-stable pair is deleted during an execution of the algorithm.
In what follows, $I$ is an instance of {\sc spa-st}, $(s_i, p_j)$ is an acceptable pair in $I$ and $l_k$ is the lecturer who offers $p_j$.
\begin{restatable}[]{lemma}{nopairdeletion}
\label{pair-deletion}
If a pair $(s_i, p_j)$ is deleted during an execution of Algorithm {\sf SPA-ST-super}, then $(s_i, p_j)$ does not belong to any super-stable matching in $I$.
\end{restatable}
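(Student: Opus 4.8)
The plan is to argue by induction on the number of deletions performed by \texttt{Algorithm {\scriptsize SPA-ST-}super}, or equivalently to suppose for contradiction that some super-stable pair is deleted and to focus on the \emph{first} moment at which this happens. Let $(s_i,p_j)$ be a super-stable pair deleted at this moment, and let $M^*$ be a super-stable matching with $(s_i,p_j)\in M^*$. The inductive hypothesis is then that, immediately before this deletion, no super-stable pair has yet been removed. The central \emph{invariant} I would extract from this is: for every pair $(s_a,p_b)\in M^*$, the project $p_b$ still lies on $s_a$'s current list (since $(s_a,p_b)$ is super-stable and so has not been deleted). Consequently, whenever a student $s_a$ is provisionally assigned to a project $p$ at the head of her current list, $p$ is at least as good for $s_a$ as $M^*(s_a)$; this means condition~(a) of the super-stable blocking-pair definition is met automatically for any such pair $(s_a,p)\notin M^*$, so throughout the proof I only ever have to verify condition~(b).

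Next I would split into cases according to which deletion site in Algorithm~\ref{algorithmSPA-STsuper} removed $(s_i,p_j)$: oversubscription of $p_j$, oversubscription of $l_k$, $p_j$ becoming full, $l_k$ becoming full, or the post-\texttt{while}-loop deletion. In each case the uniform strategy is a pigeonhole argument: at the deletion moment the number of students provisionally assigned to the relevant agent ($p_j$ or $l_k$) is at least its capacity, and every one of them is at least as good as $s_i$ in the relevant list ($\mathcal{L}_k^j$ or $\mathcal{L}_k$), whereas $M^*$ assigns at most that capacity to the same agent with one slot taken by $s_i$ itself. Hence some provisionally-assigned student $s_{t'}$, at least as good as $s_i$, is \emph{not} assigned to that agent in $M^*$. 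By the invariant, $s_{t'}$'s provisional project $p_v$ satisfies condition~(a) against $M^*$, and I then show $(s_{t'},p_v)$ blocks $M^*$, contradicting its super-stability.

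The device that makes condition~(b) go through is that the projected list $\mathcal{L}_k^j$ preserves the order of $\mathcal{L}_k$, so a preference established in a projected list lifts to $\mathcal{L}_k$. Combined with $s_i\in M^*(l_k)$ (so the worst student of $M^*(l_k)$ is no better than $s_i$), this lets me conclude that $l_k$ weakly prefers the substitute $s_{t'}$ to the worst student of $M^*(l_k)$, yielding case~(ii) of Definition~\ref{definition:super-stability} when $p_v$ is undersubscribed in $M^*$ while $l_k$ is full, and case~(i) when both are undersubscribed; when $p_j$ (respectively $p_v$) is full in $M^*$ the comparison is made directly inside $M^*(p_j)$ and gives case~(iii).

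I expect the genuine obstacle to be twofold. First, in the lecturer-oriented deletions the substitute's provisional project $p_v$ may itself be full in $M^*$ with occupants all strictly better than $s_{t'}$; ruling this out for \emph{every} candidate substitute simultaneously requires a secondary counting argument that exploits the capacity relation $\max\{c_j:p_j\in P_k\}\le d_k\le\sum\{c_j:p_j\in P_k\}$ rather than a single pigeonhole. Second, and most delicate, is the post-\texttt{while}-loop deletion at line~\ref{alg:deletion-outside}, where a project $p_j$ that was previously full has become undersubscribed: here the blocking pair of $M^*$ must be built from the most-preferred rejected student $s_r$ together with the hypothesis that the tail of $\mathcal{L}_k$ is no better than $s_r$, and the argument relies on $p_j$ genuinely having attained capacity earlier in the execution combined with the first-deletion hypothesis applied to the students previously rejected from $p_j$. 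This step has no counterpart in the tie-free {\scriptsize SPA-S} analysis and is where the extra care demanded by {\scriptsize SPA-ST} is concentrated.
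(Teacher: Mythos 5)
Your strategy for the four deletion sites inside the \texttt{while} loop is essentially the paper's own argument (Lemma~\ref{pair-deletion-within}): fix the first super-stable pair deleted, observe that every student applies only to projects at the head of her current list so condition~(a) is automatic, find a substitute student provisionally assigned to the relevant project or lecturer but absent from $M^*$ there, and use $s_i \in M^*(l_k)$ to settle condition~(b) in all sub-cases. One small correction: in the lecturer-oriented cases the needed "secondary counting" simply picks a project of $l_k$ with fewer assignees in $M^*$ than provisional assignees in $M$ -- such a project exists by counting over $P_k$ and is automatically undersubscribed in $M^*$ because $M$ never leaves it oversubscribed at that point; the capacity relation $\max\{c_j : p_j \in P_k\} \le d_k \le \sum\{c_j : p_j \in P_k\}$ plays no role.

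The genuine gap is the post-\texttt{while}-loop deletion (line~\ref{alg:deletion-outside}), which you correctly flag as the delicate case but never actually prove, and for which your uniform pigeonhole strategy cannot work. At that deletion the relevant project $p_{j'}$ is \emph{undersubscribed} in $M$, and $l_k$ need not be full either, so there is no capacity saturation to pigeonhole against. What the paper does instead (Lemma~\ref{pair-deletion-outside}) is the following: letting $s_{i'}$ be the most-preferred student rejected from $p_{j'}$, it deduces from $s_i \in M^*(l_k)$ and the tail condition that, to avoid $(s_{i'},p_{j'})$ blocking $M^*$, the project $p_{j'}$ must be \emph{full in $M^*$} with every occupant strictly better than $s_{i'}$; since $p_{j'}$ is undersubscribed in $M$, there is some $s_{q_1} \in M^*(p_{j'}) \setminus M(p_{j'})$, and $s_{q_1}$ must be assigned in $M$ to a project $p_{t_1}$ she strictly prefers, because $(s_{q_1},p_{j'})$ is super-stable and hence never deleted. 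Super-stability applied to $(s_{q_1},p_{t_1})$ then forces either $p_{t_1}$ or its lecturer to be full in $M^*$ with strictly better students, producing a further student $s_{q_2}$, and so on: the proof constructs a sequence of students and projects alternating between $M \setminus M^*$ and $M^* \setminus M$, and establishes \emph{by induction that all students in the sequence are distinct} (the induction step itself requires a case analysis on whether a hypothetical repetition was generated by a full project or a full lecturer in $M^*$). The contradiction then comes from finiteness of the instance. This chain construction and distinctness induction are the heart of the case; they have no counterpart in your sketch and cannot be replaced by "the first-deletion hypothesis applied to the students previously rejected from $p_{j'}$" -- the rejected students tell you nothing about the occupants of $p_{j'}$ in $M^*$, which is exactly where the obstruction lives.
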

\noindent
In order to prove Lemma \ref{pair-deletion}, we present Lemmas \ref{lemma:super-pair-deletion-within} and \ref{lemma:super-pair-deletion-outside}.
\begin{lemma}
\label{lemma:super-pair-deletion-within}
If a pair $(s_i, p_j)$ is deleted within the \texttt{while} loop during an execution of Algorithm {\sf SPA-ST-super} then $(s_i, p_j)$ does not belong to any super-stable matching in $I$.
\end{lemma}

\begin{proof}
Without loss of generality, suppose that the first super-stable pair to be deleted within the \texttt{while} loop during an arbitrary execution $E$ of the algorithm is $(s_i, p_j)$, which belongs to some super-stable matching, say $M^*$. Suppose that $M$ is the assignment immediately after the deletion. Let us denote this point in the algorithm where the deletion is made by $\ddagger$. During $E$, there are four cases that would lead to the deletion of any (student, project) pair within the \texttt{while} loop.
\begin{enumerate}[(1)]
\item \emph{$p_j$ is oversubscribed.} Suppose that $(s_i, p_j)$ is deleted because some student (possibly $s_i$) became provisionally assigned to $p_j$ during $E$, causing $p_j$ to become oversubscribed. If $p_j$ is full or undersubscribed at point $\ddagger$, since $s_{i} \in M^*(p_j) \setminus M(p_j)$ and no project can be oversubscribed in $M^*$, then there is some student $s_r \in M(p_j) \setminus M^*(p_j)$ such that $l_k$ prefers $s_r$ to $s_i$ or is indifferent between them. We note that $s_r$ cannot be assigned to a project that she prefers to $p_j$ in any super-stable matching. Otherwise, since $p_j$ must have been in the head of $s_r$'s preference list when she applied, this would mean that a super-stable pair was deleted before $(s_i, p_j)$. Thus either $s_r$ is unassigned in $M^*$ or $s_r$ prefers $p_j$ to $M^*(s_r)$ or $s_r$ is indifferent between them. Clearly, for any combination of $l_k$ and $p_j$ being full or undersubscribed in $M^*$, it follows that $(s_r, p_j)$ blocks $M^*$, a contradiction.
\item \emph{$l_k$ is oversubscribed.} Suppose that $(s_i, p_j)$ is deleted because some student (possibly $s_i$) became provisionally assigned to a project offered by lecturer $l_k$ during $E$, causing $l_k$ to become oversubscribed. At point $\ddagger$, none of the projects offered by $l_k$ is oversubscribed in $M$, otherwise we will be in case (1). 
Similar to case (1), if $l_k$ is full or undersubscribed at point $\ddagger$, since $s_{i} \in M^*(p_{j}) \setminus M(p_{j})$ and no lecturer can be oversubscribed in $M^*$, it follows that there is some project $p_{j'} \in P_k$ and some student $s_{r} \in M(p_{j'}) \setminus M^*(p_{j'})$ such that $l_k$ prefers $s_{r}$ to $s_i$ or is indifferent between them. We consider two subcases.
\begin{enumerate}[(i)]
\item If $p_{j'} = p_j$ then $s_{r} \neq s_i$. Moreover, as in case (1), either $s_{r}$ is unassigned in $M^*$ or $s_{r}$ prefers $p_{j'}$ to $M^*(s_{r})$ or $s_r$ is indifferent between them. For any combination of $l_k$ and $p_{j'}$ being full or undersubscribed in $M^*$, we have that $(s_{r}, p_{j'})$ blocks $M^*$, a contradiction.
\item If $p_{j'} \neq p_j$. Assume firstly that $s_{r} \neq s_i$. Then as $p_{j'}$ has fewer assignees in $M^*$ than it has provisional assignees in $M$, and as in (i) above, $(s_{r}, p_{j'})$ blocks $M^*$, a contradiction. Finally assume $s_{r} = s_i$. Then $s_i$ must have applied to $p_{j'}$ at some point during $E$ before $\ddagger$. Clearly, either $s_i$ prefers $p_{j'}$ to $p_j$ or $s_i$ is indifferent between them, since $p_{j'}$ must have been in the head of $s_i$'s preference list when $s_i$ applied. Since $s_i \in M^*(l_k)$ and $p_{j'}$ is undersubscribed in $M^*$, it follows that $(s_i, p_{j'})$ blocks $M^*$, a contradiction.
\end{enumerate}

\item \emph{$p_j$ is full.} Suppose that $(s_i, p_j)$ is deleted because $p_j$ became full during $E$. At point $\ddagger$, $p_j$ is full in $M$. Thus at least one of the students in $M(p_j)$, say $s_{r}$, will not be assigned to $p_j$ in $M^*$, for otherwise $p_j$ will be oversubscribed in $M^*$. This implies that either $s_{r}$ is unassigned in $M^*$ or $s_{r}$ prefers $p_j$ to $M^*(s_{r})$ or $s_{r}$ is indifferent between them. For otherwise, we obtain a contradiction to $(s_i, p_j)$ being the first super-stable pair to be deleted. Since $l_k$ prefers $s_{r}$ to $s_i$, it follows that $(s_{r}, p_j)$ blocks $M^*$, a contradiction.

\item \emph{$l_k$ is full.} Suppose that $(s_i, p_j)$ is deleted because $l_k$ became full during $E$. We consider two subcases.
\begin{enumerate}[(i)]
	\item All the students assigned to $p_j$ in $M$ at point $\ddagger$ (if any) are also assigned to $p_j$ in $M^*$. This implies that $p_j$ has one more assignee in $M^*$ than it has provisional assignees in $M$, namely $s_i$. Thus, some other project $p_{j'} \in P_k$ has fewer assignees in $M^*$ than it has provisional assignees in $M$, for otherwise $l_k$ would be oversubscribed in $M^*$. Hence there exists some student $s_{r} \in M(p_{j'}) \setminus M^*(p_{j'})$.  It is clear that $s_{r} \neq s_i$, since $s_i$ plays the role of $s_t$ at some for loop iteration in line 24 of the algorithm.
	Also, $s_{r}$ cannot be assigned to a project that she prefers to $p_{j'}$ in $M^*$, as explained in case (1). Moreover, since $p_{j'}$ is undersubscribed in $M^*$ and $l_k$ prefers $s_{r}$ to $s_i$, it follows that $(s_{r}, p_{j'})$ blocks $M^*$, a contradiction. 
	\item Some student, say $s_{r}$, who is assigned to $p_j$ in $M$ is not assigned to $p_j$ in $M^*$, i.e., $s_{r} \in M(p_j) \setminus M^*(p_j)$. Since $s_{r}$ cannot be assigned in $M^*$ to a project that she prefers to $p_j$ and since $l_k$ prefers $s_{r}$ to $s_i$, it follows that $(s_{r}, p_j)$ blocks $M^*$, a contradiction.
\end{enumerate}
\end{enumerate} \qed \end{proof}
\begin{lemma}
\label{lemma:super-pair-deletion-outside}
If a pair $(s_i, p_j)$ is deleted in line 34 of Algorithm {\sf SPA-ST-super} then $(s_i, p_j)$ does not belong to any super-stable matching in $I$.
\end{lemma}
\begin{proof}
Without loss of generality, suppose that the first super-stable pair to be deleted during an arbitrary execution $E$ of the algorithm is $(s_i, p_j)$, which belongs to some super-stable matching, say $M^*$.  Then by Lemma \ref{lemma:super-pair-deletion-within}, $(s_i, p_j)$ was deleted in line 34 during $E$. Let $l_k$ be the lecturer who offers $p_j$. Suppose that $M$ is the assignment during the iteration of the \texttt{repeat-until} loop where $(s_i, p_j)$ was deleted.

Let $p_{j'}$ be some other project offered by $l_k$ which was full during a previous \texttt{repeat-until} loop iteration and subsequently ends up undersubscribed in the current \texttt{repeat-until} loop iteration, i.e., $p_{j'}$ plays the role of $p_j$ in line 28. Suppose that $s_{i'}$ plays the role of $s_r$ in line 30, i.e., $s_{i'}$ is the most-preferred student rejected from $p_{j'}$ according to $\mathcal{L}_k^{j'}$ (possibly $s_{i'} = s_i$). Moreover $s_{i'}$ was provisionally assigned to $p_{j'}$ during a previous \texttt{repeat-until} loop iteration but $(s_{i'}, p_{j'}) \notin M$ in the current \texttt{repeat-until} loop iteration. Thus $(s_{i'}, p_{j'})$ has been deleted before the deletion of $(s_i, p_j)$ occurred; and thus, $(s_{i'}, p_{j'}) \notin M^*$, since $(s_i, p_j)$ is the first super-stable pair to be deleted. Further,  $l_k$ either prefers $s_{i'}$ to $s_i$ or is indifferent between them, since $s_i$ plays the role of $s_t$ at some for loop iteration in line 32.

We remark that no student who is provisionally assigned to some project in $M$ can be assigned to a project better than her current assignment in any super-stable matching. For otherwise, this would mean a super-stable pair must have been deleted before $(s_i, p_j)$, since each student who is assigned in $M$ applies to projects in the head of her preference list. So, either $s_{i'}$ is unassigned in $M^*$ or $s_{i'}$ prefers $p_{j'}$ to $M^*(s_{i'})$ or $s_i$ is indifferent between them. By the super-stability of $M^*$, $p_{j'}$ is full in $M^*$ and $l_k$ prefers every student in $M^*(p_{j'})$ to $s_{i'}$; for otherwise, $(s_{i'}, p_{j'})$ blocks $M^*$, a contradiction.

Let $l_{z_0} = l_k$, $p_{t_0} = p_{j'}$ and $s_{q_0} = s_{i'}$. Just before the deletion of $(s_i, p_j)$ occurred, $p_{t_0}$ is undersubscribed in $M$. Since $p_{t_0}$ is full in $M^*$, there exists some student $s_{q_1} \in M^*(p_{t_0}) \setminus M(p_{t_0})$. We note that $l_{z_0}$ prefers $s_{q_1}$ to $s_{q_0}$; for otherwise, $(s_{i'}, p_{j'})$ blocks $M^*$, a contradiction. Let $p_{t_1} = p_{t_0}$. Since $(s_i, p_j)$ is the first super-stable pair to be deleted, $s_{q_1}$ is assigned in $M$ to a project $p_{t_2}$ such that $s_{q_1}$ prefers $p_{t_2}$ to $p_{t_1}$. For otherwise, as each student applies to projects at the head of her preference list, that would mean $(s_{q_1}, p_{t_1})$ must have been deleted before $(s_i, p_j)$, a contradiction. We note that $p_{t_2} \neq p_{t_1}$, since $(s_{q_1} , p_{t_2}) \in M$ and $(s_{q_1} , p_{t_1}) \notin M$. Let $l_{z_1}$ be the lecturer who offers $p_{t_2}$. By the super-stability of $M^*$, either (i) or (ii) holds as follows:
\begin{enumerate}[(i)]

\item $p_{t_2}$ is full in $M^*$ and $l_{z_1}$ prefers the worst student/s in $M^*(p_{t_2})$ to $s_{q_1}$;
\item $p_{t_2}$ is undersubscribed in $M^*$, $l_{z_1}$ is full in $M^*$, $s_{q_1} \notin M^*(l_{z_1})$ and $l_{z_1}$ prefer the worst student/s in $M^*(l_{z_1})$ to $s_{q_1}$.
\end{enumerate}
Otherwise $(s_{q_1}, p_{t_2})$ blocks $M^*$. In case (i), there exists some student $s_{q_2} \in M^*(p_{t_2}) \setminus M(p_{t_2})$. Let $p_{t_3} = p_{t_2}$. In case (ii), there exists some student $s_{q_2} \in M^*(l_{z_1}) \setminus M(l_{z_1})$. We note that $l_{z_1}$ prefers $s_{q_2}$ to $s_{q_1}$. Now, suppose $M^*(s_{q_2}) = p_{t_3}$ (possibly $p_{t_3} = p_{t_2}$). It is clear that $s_{q_2} \neq s_{q_1}$. Applying similar reasoning as for $s_{q_1}$, $s_{q_2}$ is assigned in $M$ to a project $p_{t_4}$ such that $s_{q_2}$ prefers $p_{t_4}$ to $p_{t_3}$. Let $l_{z_2}$ be the lecturer who offers $p_{t_4}$. We are identifying a sequence $\langle s_{q_i}\rangle_{i \geq 1}$ of students, a sequence $\langle p_{t_i}\rangle_{i \geq 1}$ of projects, and a sequence $\langle l_{z_i}\rangle_{i \geq 1}$ of lecturers, such that, for each $i \geq 1$

\begin{enumerate}

\item $s_{q_{i}}$ prefers $p_{t_{2i}}$ to $p_{t_{2i-1}}$,
\item $(s_{q_i}, p_{t_{2i}}) \in M$ and $(s_{q_i}, p_{t_{2i - 1}}) \in M^*$,
\item $l_{z_i}$ prefers $s_{q_{i+1}}$ to $s_{q_{i}}$; also, $l_{z_i}$ offers both $p_{t_{2i}}$ and $p_{t_{2i+1}}$ (possibly $p_{t_{2i}} = p_{t_{2i+1}}$).
\end{enumerate}

First we claim that for each new project that we identify, $p_{t_{2i}} \neq p_{t_{2i-1}}$ for $i \geq 1$. Suppose $p_{t_{2i}} = p_{t_{2i-1}}$ for some $i \geq 1$. From above $s_{q_{i}}$ was identified by $l_{z_{i-1}}$ such that $(s_{q_{i}}, p_{t_{2i-1}}) \in M^* \setminus M$. Moreover $(s_{q_{i}}, p_{t_{2i}}) \in M$. Hence we reach a contradiction. Clearly, for each student $s_{q_i}$ that we identify, for $i \geq 1$ , $s_{q_i}$ must be assigned to distinct projects in $M$ and in $M^*$.

Next we claim that for each new student $s_{q_i}$ that we identify, $s_{q_i} \neq s_{q_t}$ for $1 \leq t < i$. We prove this by induction on $i$. For the base case, clearly $s_{q_2} \neq s_{q_1}$. We assume that the claim holds for some $i \geq 1$, i.e., the sequence $s_{q_{1}}, s_{q_2}, \ldots, s_{q_{i}}$ consists of distinct students. We show that the claim holds for $i+1$, i.e., the sequence $s_{q_{1}}, s_{q_2}, \ldots, s_{q_{i}}, s_{q_{i+1}}$ also consists of distinct students. Clearly $s_{q_{i+1}} \neq s_{q_{i}}$ since $l_{z_{i}}$ prefers $s_{q_{i+1}}$ to $s_{q_{i}}$. Thus, it suffices to show that $s_{q_{i+1}} \neq s_{q_{j}}$ for $1 \leq j \leq i-1$. Now, suppose $s_{q_{i+1}} = s_{q_{j}}$ for $1 \leq j \leq i-1$. This implies that $s_{q_{j}}$ was identified by $l_{z_{i}}$ and clearly $l_{z_{i}}$ prefers $s_{q_{j}}$ to $s_{q_{j-1}}$. Now since $s_{q_{i+1}}$ was also identified by $l_{z_{i}}$ to avoid the blocking pair $(s_{q_i}, p_{t_{2_i}})$ in $M^*$, it follows that either
(i) $p_{t_{2i}}$ is full in $M^*$, or
(ii) $p_{t_{2i}}$ is undersubscribed in $M^*$ and $l_{z_{i}}$ is full in $M^*$. We consider each cases further as follows.
\begin{enumerate}[(i)]

\item If $p_{t_{2i}}$ is full in $M^*$, we know that $(s_{q_{i}}, p_{t_{2i}}) \in M \setminus M^*$. Moreover $s_{q_j}$ was identified by $l_{z_{i+1}}$ because of case (i). Furthermore $(s_{q_{j-1}}, p_{t_{2i}}) \in M \setminus M^*$. In this case, $p_{t_{2i+1}} = p_{t_{2i}}$ and we have that
$$(s_{q_{i}}, p_{t_{2i+1}})\in M \setminus M^* \mbox{ and } (s_{q_{i+1}}, p_{t_{2i+1}}) \in M^* \setminus M,$$ 
$$(s_{q_{j-1}}, p_{t_{2i+1}}) \in M \setminus M^* \mbox{ and } (s_{q_{j}}, p_{t_{2i+1}}) \in M^* \setminus M.$$
By the inductive hypothesis, the sequence $s_{q_{1}}, s_{q_2}, \ldots, s_{q_{j-1}}, $ $s_{q_j}, \ldots, s_{q_{i}}$ consists of distinct students. This implies that $s_{q_{i}} \neq s_{q_{j-1}}$. Thus since $p_{t_{2i+1}}$ is full in $M^*$, $l_{z_{i}}$ should have been able to identify distinct students $s_{q_j}$ and $s_{q_{i+1}}$ to avoid the blocking pairs $(s_{q_{j-1}}, p_{t_{2i+1}})$ and $(s_{q_{i}}, p_{t_{2i+1}})$ respectively in $M^*$, a contradiction.
\item $p_{t_{2i}}$ is undersubscribed in $M^*$ and $l_{z_{i}}$ is full in $M^*$. Similarly as in case (i) above, we have that
$$s_{q_{i}} \in M(l_{z_i}) \setminus M^*(l_{z_i}) \mbox{ and } s_{q_{i+1}} \in M^*(l_{z_i}) \setminus M(l_{z_i}),$$ 
$$s_{q_{j-1}} \in M(l_{z_i}) \setminus M^*(l_{z_i}) \mbox{ and } s_{q_{j}} \in M^*(l_{z_i}) \setminus M(l_{z_i}).$$
Since $s_{q_{i}} \neq s_{q_{j-1}}$ and $l_{z_{i}}$ is full in $M^*$, $l_{z_{i}}$ should have been able to identify distinct students $s_{q_j}$ and $s_{q_{i+1}}$ corresponding to students $s_{q_{j-1}}$ and $s_{q_{i}}$ respectively, a contradiction.
\end{enumerate}  

This completes the induction step. As the sequence of distinct students and projects we are identifying is infinite, we reach an immediate contradiction. \qed \end{proof}

Lemmas \ref{lemma:super-pair-deletion-within} and \ref{lemma:super-pair-deletion-outside} immediately give rise to Lemma \ref{pair-deletion}. The next lemma will be used as a tool in the proof of the remaining lemmas.

\begin{restatable}[]{lemma}{lecturerundersubscribedtool}

Let $M$ be the assignment at the termination of Algorithm {\sf SPA-ST-super} and let $M^*$ be any super-stable matching in $I$. Let $l_k$ be an arbitrary lecturer: (i) if $l_k$ is undersubscribed in $M^*$ then every student who is assigned to $l_k$ in $M$ is also assigned to $l_k$ in $M^*$; and (ii) if $l_k$ is undersubscribed in $M$ then $l_k$ has the same number of assignees in $M^*$ as in $M$. 
\label{lemma:super-lecturer-undersubscribed-tool}
\end{restatable}
\begin{proof}
Let $l_k$ be an arbitrary lecturer. First, we show that (i) holds. Suppose otherwise, then there exists a student, say $s_i$, such that $s_i \in M(l_k) \setminus M^*(l_k)$. Moreover, there exists some project $p_j \in P_k$ such that $s_i \in M(p_j) \setminus M^*(p_j)$. By Lemma \ref{pair-deletion}, $s_i$ cannot be assigned to a project that she prefers to $p_j$ in $M^*$. Also, by the super-stability of $M^*$, $p_j$ is full in $M^*$ and $l_k$ prefers the worst student/s in $M^*(p_j)$ to $s_i$. 

Let $l_{z_0} = l_k$, $p_{t_0} = p_{j}$, and $s_{q_0} = s_{i}$.
As $p_{t_0}$ is full in $M^*$ and no project is oversubscribed in $M$, there exists some student $s_{q_1} \in M^*(p_{t_0}) \setminus M(p_{t_0})$ such that $l_{z_0}$ prefers $s_{q_1}$ to $s_{q_0}$. Let $p_{t_1} = p_{t_0}$. By Lemma \ref{pair-deletion}, $s_{q_1}$ is assigned in $M$ to a project $p_{t_2}$ such that $s_{q_1}$ prefers $p_{t_2}$ to $p_{t_1}$. 
We note that $s_{q_1}$ cannot be indifferent between $p_{t_2}$ and $p_{t_1}$; for otherwise, as each student applies to projects at the head of her preference list, since $(s_{q_1}, p_{t_1}) \notin M$, that would mean $(s_{q_1}, p_{t_1})$ must have been deleted during the algorithm's execution, contradicting Lemma \ref{pair-deletion}. It follows that $s_{q_1} \in M(p_{t_2}) \setminus M^*(p_{t_2})$. Let $l_{z_1}$ be the lecturer who offers $p_{t_2}$. By the super-stability of $M^*$, either (i) or (ii) holds as follows:

\begin{enumerate}[(i)]

\item $p_{t_2}$ is full in $M^*$ and $l_{z_1}$ prefers the worst student/s in $M^*(p_{t_2})$ to $s_{q_1}$;
\item $p_{t_2}$ is undersubscribed in $M^*$, $l_{z_1}$ is full in $M^*$, $s_{q_1} \notin M^*(l_{z_1})$ and $l_{z_1}$ prefers the worst student/s in $M^*(l_{z_1})$ to $s_{q_1}$.
\end{enumerate}

Otherwise $(s_{q_1}, p_{t_2})$ blocks $M^*$. In case (i), there exists some student $s_{q_2} \in M^*(p_{t_2}) \setminus M(p_{t_2})$. Let $p_{t_3} = p_{t_2}$. In case (ii), there exists some student $s_{q_2} \in M^*(l_{z_1}) \setminus M(l_{z_1})$. We note that $l_{z_1}$ prefers $s_{q_2}$ to $s_{q_1}$. Now, suppose $M^*(s_{q_2}) = p_{t_3}$ (possibly $p_{t_3} = p_{t_2}$). It is clear that $s_{q_2} \neq s_{q_1}$. Applying similar reasoning as for $s_{q_1}$, student $s_{q_2}$ is assigned in $M$ to a project $p_{t_4}$ such that $s_{q_2}$ prefers $p_{t_4}$ to $p_{t_3}$. Let $l_{z_2}$ be the lecturer who offers $p_{t_4}$. We are identifying a sequence $\langle s_{q_i}\rangle_{i \geq 1}$ of students, a sequence $\langle p_{t_i}\rangle_{i \geq 1}$ of projects, and a sequence $\langle l_{z_i}\rangle_{i \geq 1}$ of lecturers, such that, for each $i \geq 1$

\begin{enumerate}

\item $s_{q_{i}}$ prefers $p_{t_{2i}}$ to $p_{t_{2i-1}}$,
\item $(s_{q_i}, p_{t_{2i}}) \in M$ and $(s_{q_i}, p_{t_{2i - 1}}) \in M^*$,
\item $l_{z_i}$ prefers $s_{q_{i+1}}$ to $s_{q_{i}}$; also, $l_{z_i}$ offers both $p_{t_{2i}}$ and $p_{t_{2i+1}}$ (possibly $p_{t_{2i}} = p_{t_{2i+1}}$).
\end{enumerate}

Following a similar argument as in the proof of Lemma~\ref{lemma:super-pair-deletion-outside}, we can identify an infinite sequence of distinct students and projects, a contradiction. Hence, if $l_k$ is undersubscribed in $M^*$ then every student who is assigned to $l_k$ in $M$ is also assigned to $l_k$ in $M^*$. 

Next, we show that (ii) holds. By the first claim, any lecturer who is full in $M$ is also full in $M^*$, and any lecturer who is undersubscribed in $M$ has as many assignees in $M^*$ as she has in $M$. Hence 
\begin{eqnarray}
\label{ineq:undersubscribed-lecturer-1}
\sum_{l_k \in \mathcal{L}}{|M(l_k)|} \leq \sum_{l_k \in \mathcal{L}}{|M^*(l_k)|} \enspace.
\end{eqnarray}
We note that if a student $s_{i}$ is unassigned in $M$, by Lemma \ref{pair-deletion}, $s_{i}$ is unassigned in $M^*$. Equivalently, if $s_{i}$ is assigned in $M^*$ then $s_{i}$ is assigned in $M$. Let $S_1$ denote the set of students who are assigned to at least one project in $M$, and let $S_2$ denote the set of students who are assigned to a project in $M^*$; it follows that $|S_2| \leq |S_1|$. Further, we have that
\begin{eqnarray}
\label{ineq:undersubscribed-lecturer-2}
\sum_{l_k \in \mathcal{L}}{|M^*(l_k)|} = |S_2| \leq |S_1| \leq \sum_{l_k \in \mathcal{L}}{|M(l_k)|},
\end{eqnarray}
From Inequalities \eqref{ineq:undersubscribed-lecturer-1} and \eqref{ineq:undersubscribed-lecturer-2}, it follows that $|M(l_k)| = |M^*(l_k)|$ for each $l_k \in \mathcal{L}$.
\qed  \end{proof}
The next three lemmas deal with the case that Algorithm {\sf SPA-ST-super} reports the non-existence of a super-stable matching in $I$.
\begin{restatable}[]{lemma}{studentlemma}
\label{lemma-super-multi-assignment}
If a student is assigned to two or more projects at the termination of Algorithm {\sf SPA-ST-super} then $I$ admits no super-stable matching.
\end{restatable}
\begin{proof}
Let $M$ be the assignment at the termination of the algorithm. Suppose for a contradiction that there exists a super-stable matching $M^*$ in $I$. Suppose that a student is assigned to two or more projects in $M$. Then either (a) any two of these projects are offered by different lecturers or (b) all of these projects are offered by the same lecturer. 

Firstly, suppose (a) holds. Then some lecturer has fewer assignees in $M^*$ than in $M$. Suppose not, then
\begin{eqnarray}
\label{eqn-multiple-assignment-1}
\sum_{l_k \in \mathcal{L}}{|M^*(l_k)|} \geq \sum_{l_k \in  \mathcal{L}}{|M(l_k)|}\enspace.
\end{eqnarray}
Let $S_1$ and $S_2$ be as defined in the proof of Lemma \ref{lemma:super-lecturer-undersubscribed-tool}, it follows that $|S_2| \leq |S_1|$. Hence,
\begin{eqnarray}
\label{eqn-multiple-assignment-2}
\sum_{l_k \in  \mathcal{L}}{|M^*(l_k)|} = |S_2| \leq |S_1| < \sum_{l_k \in  \mathcal{L}}{|M(l_k)|},
\end{eqnarray}
since some student in $S_1$ is assigned in $M$ to two or more projects offered by different lecturers. Inequality \eqref{eqn-multiple-assignment-2} contradicts Inequality \eqref{eqn-multiple-assignment-1}. Hence, our claim is established. As some lecturer $l_k$ has fewer assignees in $M^*$ than in $M$, it follows that $l_k$ is undersubscribed in $M^*$, since no lecturer is oversubscribed in $M$. In particular, there exists some project $p_j \in P_k$ and some student, say $s_i$, such that $p_j$ is undersubscribed in $M^*$ and $(s_i, p_j) \in M \setminus M^*$. Since $(s_i, p_j) \in M$, then $p_j$ must have been in the head of $s_i$'s preference list when $s_i$ applied to $p_j$ during the algorithm's execution. By Lemma \ref{pair-deletion}, either $s_i$ is unassigned in $M^*$ or $s_i$ prefers $p_j$ to $M^*(s_i)$ or $s_i$ is indifferent between them. Hence $(s_i, p_j)$ blocks $M^*$, a contradiction. 

Next, suppose (b) holds. Then $|S_1| \leq \sum_{l_k \in \mathcal{L}} |M(l_k)|$. As in case (a), since $|S_2|\leq |S_1|$, it follows that
$$\sum_{l_k \in \mathcal{L}} |M^*(l_k)| \leq \sum_{l_k \in \mathcal{L}} |M(l_k)|\enspace.$$

Suppose first that $|M^*(l_k)| < |M(l_k)|$ for some $l_k \in \mathcal{L}$. Then $l_k$ has fewer assignees in $M^*$ than in $M$, and following a similar argument as in case (a) above, we reach an immediate contradiction. Hence, $|M^*(l_k)| = |M(l_k)|$ for all $l_k \in \mathcal{L}$. For each $l_k \in \mathcal{L}$, we claim that every student who is assigned to $l_k$ in $M$ is also assigned to $l_k$ in $M^*$. Suppose otherwise. Let $l_{z_1}$ be an arbitrary lecturer in $\mathcal{L}$. Then there exists some student $s_{q_1} \in M(l_{z_1}) \setminus M^*(l_{z_1})$. Let $M(s_{q_1}) = p_{t_2}$. By Lemma \ref{pair-deletion}, $s_{q_1}$ is assigned in $M^*$ to a project $p_{t_1}$ such that $s_{q_1}$ prefers $p_{t_2}$ to $p_{t_1}$. Clearly, $p_{t_1}$ is not offered by $l_{z_1}$, since $s_{q_1} \in M(l_{z_1}) \setminus M^*(l_{z_1})$. We also note that $s_{q_1}$ cannot be indifferent between $p_{t_2}$ and $p_{t_1}$. Otherwise, the argument follows from (a), since $s_{q_1}$ is assigned in $M$ to two projects offered by different lecturers, and we reach an immediate contradiction. By the super-stability of $M^*$, either (i) or (ii) holds as follows:
\begin{enumerate}[(a)]

\item $p_{t_2}$ is full in $M^*$ and $l_{z_1}$ prefers every student in $M^*(p_{t_2})$ to $s_{q_1}$;
\item $p_{t_2}$ is undersubscribed in $M^*$, $l_{z_1}$ is full in $M^*$ and $l_{z_1}$ prefers every student in $M^*(l_{z_1})$ to $s_{q_1}$.
\end{enumerate}

Otherwise, $(s_{q_1}, p_{t_2})$ blocks $M^*$. In case (i), there exists some student $s_{q_2} \in M^*(p_{t_2}) \setminus M(p_{t_2})$. Let $p_{t_3} = p_{t_2}$. In case (ii), there exists some student $s_{q_2} \in M^*(l_{z_1}) \setminus M(l_{z_1})$. We note that $l_{z_1}$ prefers $s_{q_2}$ to $s_{q_1}$, and clearly $s_{q_2} \neq s_{q_1}$. Let $M^*(s_{q_2}) = p_{t_3}$ (possibly $p_{t_3} = p_{t_2}$). Applying similar reasoning as for $s_{q_1}$, student $s_{q_2}$ is assigned in $M$ to a project $p_{t_4}$ such that $s_{q_2}$ prefers $p_{t_4}$ to $p_{t_3}$. We are identifying a sequence $\langle s_{q_i}\rangle_{i \geq 1}$ of students, a sequence $\langle p_{t_i}\rangle_{i \geq 1}$ of projects, and a sequence $\langle l_{z_i}\rangle_{i \geq 1}$ of lecturers, such that, for each $i \geq 1$

\begin{enumerate}

\item $s_{q_{i}}$ prefers $p_{t_{2i}}$ to $p_{t_{2i-1}}$,
\item $(s_{q_i}, p_{t_{2i}}) \in M$ and $(s_{q_i}, p_{t_{2i - 1}}) \in M^*$,
\item $l_{z_i}$ prefers $s_{q_{i+1}}$ to $s_{q_{i}}$; also, $l_{z_i}$ offers both $p_{t_{2i}}$ and $p_{t_{2i+1}}$ (possibly $p_{t_{2i}} = p_{t_{2i+1}}$).
\end{enumerate}

Following a similar argument as in the proof of Lemma~\ref{lemma:super-pair-deletion-outside}, we can identify an infinite sequence of distinct students and projects, a contradiction.

Now, let $s_i$ be an arbitrary student such that $s_i$ is assigned in $M$ to two or more projects offered by a lecturer, say $l_k$. Then $s_i \in M^*(l_k)$. Moreover, there exists some project $p_j \in P_k$ such that $(s_i, p_j) \in M \setminus M^*$. We claim that $p_j$ is undersubscribed in $M^*$. Suppose otherwise. Let $l_{z_0} = l_k$, $p_{t_0} = p_j$ and $s_{q_0} = s_i$. Then there exists some student $s_{q_1} \in M^*(p_{t_0}) \setminus M(p_{t_0})$, since $p_{t_0}$ is not oversubscribed in $M$ and $s_{q_0} \in M(p_{t_0}) \setminus M^*(p_{t_0})$. Again, by Lemma \ref{pair-deletion}, $s_{q_1}$ is assigned in $M$ to a project $p_{t_1}$ such that $s_{q_1}$ prefers $p_{t_1}$ to $p_{t_0}$. Let $l_{z_1}$ be the lecturer who offers $p_{t_1}$. Following a similar argument as in the proof of Lemma~\ref{lemma:super-pair-deletion-outside}, we can identify a sequence of distinct students and projects, and as this sequence is infinite, we reach a contradiction. Hence our claim holds, i.e., $p_j$ is undersubscribed in $M^*$. Finally, since $s_i$ cannot be assigned to any project that she prefers to $p_j$ in $M^*$ and since $(s_i, p_j) \in M^*(l_k)$, we have that $(s_i, p_j)$ blocks $M^*$, a contradiction.
\qed  \end{proof}
\begin{restatable}[]{lemma}{lecturerlemma}
\label{lemma:super-lec-full-under}
If some lecturer $l_k$ becomes full during some execution of Algorithm {\sf SPA-ST-super} and $l_k$ subsequently ends up undersubscribed at the termination of the algorithm, then $I$ admits no super-stable matching.
\end{restatable}
\begin{proof}
Let $M$ be the assignment at the termination of the algorithm. Suppose for a contradiction that there exists a super-stable matching $M^*$ in $I$. 
Let $l_k$ be the lecturer who became full during some execution of the algorithm and subsequently ends up undersubscribed in $M$. By Lemma \ref{lemma:super-lecturer-undersubscribed-tool}, $|M(l_k)| = |M^*(l_k)|$ and thus $l_k$ is undersubscribed in $M^*$. At the point in the algorithm where $l_k$ became full (line 22), we note that none of the projects offered by $l_k$ is oversubscribed. Since $l_k$ ended up undersubscribed in $M$, it follows that there is some project $p_j \in P_k$ that has fewer assignees in $M$ at the termination of the algorithm than it had at some point during the algorithm's execution, thus $p_j$ is undersubscribed in $M$.

We claim that each project offered by $l_k$ has the same number of assignees in $M^*$ as in $M$. Suppose otherwise, then there is some project $p_t \in P_k$ such that $|M^*(p_t)| < |M(p_t)|$; thus $p_t$ is undersubscribed in $M^*$, since no project is oversubscribed in $M$. It follows that there exists some student $s_r \in M(p_t) \setminus M^*(p_t)$. By Lemma \ref{pair-deletion}, $s_r$ is either unassigned in $M^*$ or prefers $p_t$ to $M^*(s_r)$. Since $l_k$ is undersubscribed in $M^*$, $(s_r, p_t)$ blocks $M^*$, a contradiction. Hence $|M^*(p_t)| \geq |M(p_t)|$. Moreover, since $|M(l_k)| = |M^*(l_k)|$, we have that $|M(p_t)| = |M^*(p_t)|$ for all $p_t \in P_k$.

Hence $p_j$ undersubscribed in $M$ implies that $p_j$ is undersubscribed in $M^*$. Moreover, there is some student $s_i$ who was provisionally assigned to $p_j$ at some point during the execution of the algorithm but $s_i$ is not assigned to $p_j$ in $M$. Thus, the pair $(s_i, p_j)$ was deleted during the algorithm's execution, so that $(s_i, p_j) \notin M^*$ by Lemma \ref{pair-deletion}. It follows that either $s_i$ is unassigned in $M^*$ or $s_i$ prefers $p_j$ to $M^*(s_i)$ or $s_i$ is indifferent between them. Hence, $(s_i, p_j)$ blocks $M^*$, a contradiction. 
\qed  \end{proof}
\begin{restatable}[]{lemma}{projectlemma}
\label{lemma-super-proj-full-under}
If the pair $(s_i, p_j)$ was deleted during some execution of Algorithm {\sf SPA-ST-super}, and at the termination of the algorithm $s_i$ is not assigned to a project better than $p_j$, and each of $p_j$ and $l_k$ is undersubscribed, then $I$ admits no super-stable matching.
\end{restatable}
\begin{proof}
Suppose for a contradiction that there exists a super-stable matching $M^*$ in $I$. Let $(s_i, p_j)$ be a pair that was deleted during an arbitrary execution $E$ of the algorithm. This implies that $(s_i, p_j) \notin M^*$ by Lemma \ref{pair-deletion}. Let $M$ be the assignment at the termination of $E$. By the hypothesis of the lemma, $l_k$ is undersubscribed in $M$. This implies that $l_k$ is undersubscribed in $M^*$, by Lemma~\ref{lemma:super-lecturer-undersubscribed-tool}. Since $p_j$ is offered by $l_k$, and $p_j$ is undersubscribed in $M$, it follows from the proof of Lemma \ref{lemma:super-lec-full-under} that $p_j$ is undersubscribed in $M^*$. Further, by the hypothesis of the lemma, either $s_i$ is unassigned in $M$, or $s_i$ prefers $p_j$ to $M(s_i)$ or is indifferent between them. By Lemma \ref{pair-deletion}, this is true for $s_i$ in $M^*$. Hence $(s_i, p_j)$ blocks $M^*$, a contradiction.
\qed \end{proof}
The next lemma shows that the final assignment may be used to determine the existence, or otherwise, of a super-stable matching in $I$.
\begin{restatable}[]{lemma}{nosuperstablematchinglemma}
\label{lemma-super-correctness}
If at the termination of Algorithm {\sf SPA-ST-super}, the assignment $M$ is not super-stable in $I$ then no super-stable matching exists in $I$.
\end{restatable}
\begin{proof}
Suppose $M$ is not super-stable in $I$. If some student $s_i$ is assigned to two or more projects in $M$ then $I$ admits no super-stable matching, by Lemma \ref{lemma-super-multi-assignment}. Hence every student is assigned to at most one project in $M$. Moreover, since no project or lecturer is oversubscribed in $M$, it follows that $M$ is a matching. Let $(s_i, p_j)$ be a blocking pair for $M$, then $s_i$ is either unassigned in $M$ or prefers $p_j$ to $M(s_i)$ or is indifferent between them. Whichever is the case, $(s_i, p_j)$ has been deleted. Let $l_k$ be the lecturer who offers $p_j$. In what follows, we will identify the point in the algorithm at which $(s_i, p_j)$ was deleted, and consequently, we will arrive at a conclusion that no super-stable matching exists.

Firstly, suppose $(s_i, p_j)$ was deleted as a result of $p_j$ being full or oversubscribed (on lines 12 or 21). Suppose $p_j$ is full in $M$. Then $(s_i, p_j)$ cannot block $M$ irrespective of whether $l_k$ is undersubscribed or full in $M$, since $l_k$ prefers the worst assigned student/s in $M(p_j)$ to $s_i$. Hence $p_j$ is undersubscribed in $M$. As $p_j$ was previously full, each pair $(s_t, p_u)$, for each $s_t$ that is no better than $s_i$ at the tail of $\mathcal{L}_k$ and each $p_u \in P_k \cap A_t$, would have been deleted on line 34 of the algorithm. Thus, if $l_k$ is full in $M$ then $(s_i, p_j)$ does not block $M$. Suppose $l_k$ is undersubscribed in $M$. If $l_k$ was full at some point during the execution of the algorithm then $I$ admits no super-stable matching, by Lemma \ref{lemma:super-lec-full-under}. Hence $l_k$ was never full during the algorithm's execution. Recall that each of $p_j$ and $l_k$ is undersubscribed in $M$. As $(s_i, p_j)$ is a blocking pair of $M$, $s_i$ cannot be assigned in $M$ to a project that she prefers to $p_j$. Hence $I$ admits no super-stable matching, by Lemma \ref{lemma-super-proj-full-under}. 

Next, suppose $(s_i, p_j)$ was deleted as a result of $l_k$ being full or oversubscribed (on lines 16 or 26), $(s_i, p_j)$ could only block $M$ if $l_k$ is undersubscribed in $M$. If this is the case then $I$ admits no super-stable matching, by Lemma \ref{lemma:super-lec-full-under}.

Finally, suppose $(s_i, p_j)$ was deleted (on line 34) because some other project $p_{j'}$ offered by $l_k$ was previously full and ended up undersubscribed on line 28. Then $l_k$ must have identified the most-preferred student, say $s_r$, who was previously assigned to $p_{j'}$ but subsequently got rejected from $p_{j'}$. At this point, $s_i$ is at the tail of $\mathcal{L}_k$ and $s_i$ is no better than $s_r$ in $\mathcal{L}_k$. Moreover, every project offered by $l_k$ that $s_i$ finds acceptable would have been deleted from $s_i$'s preference list at the for loop iteration in line 34. If $p_j$ is full in $M$ then $(s_i,p_j)$ does not block $M$. Hence $p_j$ is undersubscribed in $M$. If $l_k$ is full in $M$ then $(s_i, p_j)$ does not block $M$, since $s_i \notin M(l_k)$ and $l_k$ prefers the worst student/s in $M(l_k)$ to $s_i$. Hence $l_k$ is undersubscribed in $M$. Again by Lemma \ref{lemma-super-proj-full-under}, $I$ admits no super-stable matching.

Since $(s_i, p_j)$ is an arbitrary pair, this implies that $I$ admits no super-stable matching.
\qed  \end{proof}

The next lemma shows that Algorithm {\sf SPA-ST-super} may be implemented to run in linear time.
\begin{restatable}[]{lemma}{lineartimelemma}
\label{lemma-super-complexity}
Algorithm {\sf SPA-ST-super} may be implemented to run in $O(L)$ time and $O(n_1n_2)$ space, where $n_1$, $n_2$, and $L$ are the number of students, number of projects, and the total length of the preference lists, respectively, in $I$.
\end{restatable}
\begin{proof}
The algorithm's time complexity depends on how efficiently we can execute the operation of a student applying to a project and the operation of deleting a (student, project) pair, each of which occur once for any (student, project) pair. It turns out that both operations can be implemented to run in constant time, giving Algorithm {\sf SPA-ST-super} an overall complexity of $\Theta(L)$, where $L$ is the total length of all the preference lists. In what follows, we describe the non-trivial aspects of such an implementation. We remark that the data structures discussed here are inspired by, and extend, those detailed in \cite[Section 3.3]{AIM07} for Algorithm {\sf SPA}-student.

For each student $s_i$, build an array $\mathit{position}_{s_i}$, where $\mathit{position}_{s_i}(p_j)$ is the position of project $p_j$ in $s_i$'s preference list. For example, if $s_i$'s preference list is $(p_2 \; p_5 \; p_3) \; p_7 \; (p_6 \; p_1)$ then $\mathit{position}_{s_i}(p_5) = 2$ and $\mathit{position}_{s_i}(p_1) = 6$. In general, position captures the order in which the projects appear in the preference list when read from left to right, ignoring any ties. Represent $s_i$'s preference list by embedding doubly linked lists in an array $\mathit{preference}_{s_i}$. For each project $p_j \in A_i$, $\mathit{preference}_{s_i}(\mathit{position}_{s_i}(p_j))$ stores the list node containing $p_j$. This node contains two next pointers (and two previous pointers) --  one to the next project in $s_i$'s preference list (after deletions, this project may not be located at the next array position), and another pointer to the next project $p_{j'}$ in $s_i$'s preference list, where $p_{j'}$ and $p_j$ are both offered by the same lecturer. Construct the latter list by traversing through $s_i$'s preference list, using a temporary array to record the last project in the list offered by each lecturer. Use virtual initialisation (described in \cite[p.~149]{BB96}) for these arrays, since the overall $O(n_1 n_3)$ initialisation may be too expensive.

To represent the ties in $s_i$'s preference list, build an array $\mathit{successor}_{s_i}$.  For each project $p_j$ in $s_i$'s preference list, $\mathit{successor}_{s_i}(\mathit{position}_{s_i}(p_j))$ stores the \texttt{true} boolean if $p_j$ is tied with its successor
in $A_i$  and \texttt{false} otherwise. After the deletion of any (student, project) pair, update the successor booleans. As an illustration, with respect to $s_i$'s preference list given in the previous paragraph, $\mathit{successor}_{s_i}$ is the array [\texttt{true, true, false, false, true, false}]. 
Now, suppose $p_3$ was deleted from $s_i$'s preference list, since $\mathit{successor}_{s_i}(\mathit{position}_{s_i}(p_3))$ is \texttt{false} and $\mathit{successor}_{s_i}(\mathit{position}_{s_i}(p_5))$ is \texttt{true}, set $\mathit{successor}_{s_i}$ $(\mathit{position}_{s_i}(p_5))$ to \texttt{false} (since $p_5$ is the predecessor of $p_3$). Clearly using these data structures, we can find the next project at the head of each student's preference list, find the next project offered by a given lecturer on each student's preference list, as well as delete a project from a given student's preference list in constant time.

For each lecturer $l_k$, build two arrays $\mathit{preference}_{l_k}$ and $\mathit{successor}_{l_k}$, where $\mathit{preference}_{l_k}(s_i)$ is the position of student $s_i$ in $l_k$'s preference list, and $\mathit{successor}_{l_k}$ $(\mathit{preference}_{l_k}(s_i))$ stores the position of the first strict successor (with respect to position) of $s_i$ in $\mathcal{L}_k$ or a null value if $s_i$ has no strict successor\footnote{For example, if $l_k$'s preference list is $s_5 \; (s_3 \; s_1 \; s_6) \; s_7 \; (s_2 \; s_8)$ then $\mathit{successor}_{l_k}$ is the array $[2 \; 5 \; 5 \; 5 \; 6 \; 0 \; 0]$.}. Represent $l_k$'s preference list (i.e., $\mathcal{L}_k$) by the array $\mathit{preference}_{l_k}$, with an additional pointer, $\mathit{last}_{l_k}$. Initially, $\mathit{last}_{l_k}$ stores the index of the last position in $\mathit{preference}_{l_k}$. To represent the ties in $l_k$'s preference list, build an array $\mathit{predecessor}_{l_k}$. For each $s_i \in \mathcal{L}_k$, $\mathit{predecessor}_{l_k}(\mathit{preference}_{l_k}(s_i))$ stores the \texttt{true} boolean if $s_i$ is tied with its predecessor in $\mathcal{L}_k$ and \texttt{false} otherwise. 

When $l_k$ becomes full, make $\mathit{last}_{l_k}$ equivalent to $l_k$'s worst assigned student through the following method. Perform a backward traversal through the array $\mathit{preference}_{l_k}$, starting at $\mathit{last}_{l_k}$, and continuing until $l_k$'s worst assigned student, say $s_{i'}$, is encountered (each student stores a pointer to their assigned project, or a special null value if unassigned). Deletions must be carried out in the preference list of each student who is worse than $s_{i'}$ on $l_k$'s preference list (precisely those students whose position in $\mathit{preference}_{l_k}$ is greater than or equal to that stored in $\mathit{successor}_{l_k}(\mathit{preference}_{l_k}(s_{i'}))$)\footnote{For efficiency, we remark that it is not necessary to make deletions from the preference lists of lecturers or projected preference lists of lecturers for each project the lecturer offers, since the while loop of Algorithm {\sf SPA-ST-super} involves students applying to projects in the head of their preference list.}. 

When $l_k$ becomes oversubscribed, we can find and delete the students at the tail of $l_k$ by performing a backward traversal through the array $\mathit{preference}_{l_k}$, starting at $\mathit{last}_{l_k}$, and continuing until we encounter a student, say $s_{i'}$, such that $\mathit{predecessor}_{l_k}(\mathit{preference}_{l_k}(s_{i'}))$ stores the \texttt{false} boolean. If $l_k$ becomes undersubscribed after we break the assignment of students encountered on this traversal (including $s_{i'}$) to $l_k$, rather than update $\it{last}_{l_k}$ immediately, which could be expensive, we wait until $l_k$ becomes full again. The cost of these traversals taken over the algorithm's execution is thus linear in the length of $l_k$'s preference list.

For each project $p_j$ offered by $l_k$, build the arrays $\mathit{preference}_{p_j}$, $\mathit{successor}_{p_j}$ and  $\mathit{predecessor}_{p_j}$ corresponding to $\mathcal{L}_k^j$,  as described in the previous paragraph for $\mathcal{L}_k$. Represent the projected preference list of $l_k$ for $p_j$ (i.e., $\mathcal{L}_k^j$) by the array $\mathit{preference}_{p_j}$, with an additional pointer, $\mathit{last}_{p_j}$. These project preference arrays are used in much the same way as the lecturer preference arrays

Since we only visit a student at most twice during these backward traversals, once for the lecturer and once for the project, the asymptotic running time remains linear.
 \qed \end{proof}

\noindent 
Lemma \ref{pair-deletion} shows that there is an optimality property for each assigned student in any super-stable matching found by the algorithm, whilst Lemma \ref{lemma-super-correctness} establishes the correctness of Algorithm {\sf SPA-ST-super}. The following theorem collects together Lemmas \ref{pair-deletion}, \ref{lemma-super-correctness} and \ref{lemma-super-complexity}.
\begin{theorem}
\label{thrm:super-optimality}
For a given instance $I$ of {\sc spa-st}, Algorithm {\sf SPA-ST-super} determines, in $O(L)$ time and $O(n_1n_2)$ space, whether or not a super-stable matching exists in $I$. If such a matching does exist, all possible executions of the algorithm find one in which each assigned student is assigned to the best project that she could obtain in any super-stable matching, and each unassigned student is unassigned in all super-stable matchings.
\end{theorem}

Given the optimality property established by Theorem \ref{thrm:super-optimality}, we define the super-stable matching found by Algorithm {\sf SPA-ST-super} to be \textit{student-optimal}. 

\subsection{Properties of super-stable matchings in {\sc spa-st}}
\label{subsect:properties}
In this section, we consider properties of the set of super-stable matchings in an instance of {\sc spa-st}. We show that the Unpopular Projects Theorem for {\sc spa-s} (see Theorem \ref{thrm:rural-spa-s}) holds for {\sc spa-st} under super-stability.
\begin{restatable}[]{theorem}{upt}
\label{thrm:upt}
For a given instance $I$ of {\sc spa-st}, the following holds:
\begin{enumerate}
\item each lecturer is assigned the same number of students in all super-stable matchings;
\item exactly the same students are unassigned in all super-stable matchings;
\item a project offered by an undersubscribed lecturer has the same number of students in all super-stable matchings.
\end{enumerate}
\end{restatable}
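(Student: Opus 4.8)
The plan is to reduce the statement to the already-established Unpopular Projects Theorem for \textsc{spa-s} (Theorem~\ref{thrm:rural-spa-s}) by means of Proposition~\ref{proposition1}. First I would dispose of the trivial case: if $I$ admits no super-stable matching then all three assertions hold vacuously, so I may assume that the set of super-stable matchings is non-empty and fix two arbitrary super-stable matchings $M$ and $M'$ of $I$. The goal is then to show that $M$ and $M'$ agree on the three quantities in question; since they are arbitrary, this establishes the theorem.

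The crux is the observation that $M$ and $M'$ can be compared inside a single \textsc{spa-s} instance. By Proposition~\ref{proposition1}, a matching is super-stable in $I$ exactly when it is stable in \emph{every} \textsc{spa-s} instance obtained from $I$ by breaking the ties; hence both $M$ and $M'$ are stable in every such tie-broken instance. I would therefore fix one arbitrary tie-breaking $I'$ of $I$. Since breaking ties alters neither the sets of acceptable pairs nor the project and lecturer capacities, $M$ and $M'$ are genuine matchings of $I'$, and by the previous remark they are both \emph{stable} matchings of the \textsc{spa-s} instance $I'$.

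Now I would invoke Theorem~\ref{thrm:rural-spa-s} applied to $I'$ with the two stable matchings $M$ and $M'$. Part~(1) gives $|M(l_k)| = |M'(l_k)|$ for every lecturer $l_k$, which is exactly assertion~(1); part~(2) gives that $M$ and $M'$ leave exactly the same students unassigned, which is assertion~(2); and part~(3) gives that every project offered by an undersubscribed lecturer receives the same number of students in $M$ and $M'$, which is assertion~(3). Before applying part~(3) I would note, to keep it well-posed, that the predicate ``$l_k$ is undersubscribed'' does not depend on which super-stable matching is chosen: assertion~(1) already forces $|M(l_k)|=|M'(l_k)|$, so $|M(l_k)|<d_k$ holds in one super-stable matching iff it holds in all of them.

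Honestly, there is no serious obstacle here; the entire content lies in the reduction via Proposition~\ref{proposition1}, which lets the known \textsc{spa-s} result do all of the combinatorial work, with the only things to verify being that $M$ and $M'$ remain valid stable matchings in $I'$ and the bookkeeping about ``undersubscribed'' just mentioned. I would remark in passing that an even shorter route is available: every super-stable matching is weakly stable, so the conclusion follows directly from Proposition~\ref{proposition3}, which already asserts the Unpopular Projects Theorem for the larger set of weakly stable matchings. I would nonetheless prefer the argument via Proposition~\ref{proposition1}, as it is self-contained and does not rely on the stronger weak-stability statement.
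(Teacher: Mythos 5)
Your proof is correct and follows essentially the same route as the paper: fix two arbitrary super-stable matchings, break the ties arbitrarily to obtain a single \textsc{spa-s} instance $I'$, invoke Proposition~\ref{proposition1} to conclude both matchings are stable in $I'$, and then apply Theorem~\ref{thrm:rural-spa-s}. Your additional bookkeeping (the vacuous case and the well-posedness of ``undersubscribed'') is a minor refinement of, not a departure from, the paper's argument.
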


\begin{proof}
Let $M$ and $M^*$ be two arbitrary super-stable matchings in $I$. Let $I'$ be an instance of {\sc spa-s} obtained from $I$ by breaking the ties in $I$ in some way. Then by Proposition \ref{proposition1}, each of $M$ and $M^*$ is stable in $I'$.  Thus by Theorem \ref{thrm:rural-spa-s}, each lecturer is assigned the same number of students in $M$ and $M^*$, exactly the same students are unassigned in $M$ and $M^*$, and a project offered by an undersubscribed lecturer has the same number of students in $M$ and $M^*$.
\qed\end{proof}

\begin{figure}[t]
\centering
\renewcommand{\arraystretch}{1}
\begin{tabular}{llll}
\hline
Students' preferences & \qquad \qquad & Lecturers' preferences & offers \\ 
$s_1$: \;  $p_1$  &  & $l_1$: \; $s_5$ \; $s_6$ \; $s_4$ \;($s_1$ \; $s_2$) \; $s_3$ & $p_1$, $p_2$\\ 
$s_2$: \;($p_1$ \; $p_3$)  &  & $l_2$: \;  $s_3$ \; $s_4$ \; $s_5$ \; $s_6$ \; $s_2$ & $p_3$, $p_4$\\ 
$s_3$: \; $p_2$ \; $p_3$ &  & &\\
$s_4$: \; $p_2$ \; $p_3$ &  & &\\
$s_5$: \; $p_3$ \; $p_2$ &  & Project capacities: $c_1 = c_4 = 1, \; c_2 = c_3 = 2$& \\
$s_6$: \; $p_2$ \; $p_4$ &  & Lecturer capacities: $d_1 = 2$, $d_2 = 3$&\\ 
\hline
\end{tabular}
\caption{\label{fig:spa-st-instance-2} \small Instance $I_2$ of {\sc spa-st}.}
\end{figure}

To illustrate this, consider the {\sc spa-st} instance $I_2$ given in Fig.~\ref{fig:spa-st-instance-2}, which admits the super-stable matchings $M_1 = \{(s_3, p_3), (s_4, p_2), (s_5, p_3),$ $(s_6, p_2)\}$ and $M_2 = \{(s_3, p_3), (s_4, p_3), $ $(s_5, p_2), (s_6, p_2)\}$. Each of $l_1$ and $l_2$ is assigned the same number of students in both $M_1$ and $M_2$, illustrating part (1) of Theorem \ref{thrm:upt}.  Also, each of $s_1$ and $s_2$ is unassigned in both $M_1$ and $M_2$, illustrating part (2) of Theorem \ref{thrm:upt}. Finally, $l_2$ is undersubscribed in both $M_1$ and $M_2$, and each of $p_3$ and $p_4$ has the same number of students in both $M_1$ and $M_2$, illustrating part (3) of Theorem \ref{thrm:upt}.
%


\section{Empirical Evaluation}
\label{emprical-results}
In this section, we evaluate an implementation of Algorithm {\sf SPA-ST-super}. We implemented our algorithm in Python\footnote{https://github.com/sofiatolaosebikan/spa-st-super}, and performed our experiments on a system with dual Intel Xeon CPU E5-2640 processors with 64GB of RAM, running Ubuntu 17.10. For our experiment, we were primarily concerned with the following question: how does the nature of the preference lists in a given {\sc spa-st} instance affect the existence of a super-stable matching? 
\subsection{Datasets}
When generating random datasets, there are clearly several parameters that can be varied, such as the number of students, projects and lecturers; the lengths of the students' preference lists as well as a measure of the density of ties present in the preference lists. We denote by $t_d$, the measure of the density of ties present in the preference lists. In each student's preference list, the tie density $t_{d_s} \; (0 \leq t_{d_s} \leq 1)$ is the probability that some project is tied to its successor.  The tie density $t_{d_l}$ in each lecturer's preference list is defined similarly. At $t_{d_s} = t_{d_l} = 1$, each preference list comprises a single tie while at $t_{d_s} = t_{d_l} = 0$, no tie would exist in the preference lists, thus reducing the problem to an instance of {\sc spa-s}. 
\subsection{Experimental Setup}
For each range of values for the aforementioned parameters, we randomly generated a set of {\sc spa-st} instances, involving $n_1$ students (which we will henceforth refer to as the size of the instance), $0.5n_1$ projects, $0.2n_1$ lecturers and $1.5n_1$ total project capacity which was randomly distributed amongst the projects. The capacity for each lecturer $l_k$ was chosen uniformly at random to lie between the highest capacity of the projects offered by $l_k$ and the sum of the capacities of the projects that $l_k$ offers.\footnote{We remark that the parameter space was chosen to ensure that projects could typically accommodate more than one student, that the total capacity of the projects exceeded the number of students, and that each lecturer typically offered multiple projects, without reflecting any specific real-world application.} In each set, we measured the proportion of instances that admit a super-stable matching.

It is worth mentioning that when we varied the tie density on both the students' and lecturers' preference lists between $0.1$ and $0.5$, super-stable matchings were very elusive, even with an instance size of $100$ students. Thus, for the purpose of our experiment, we decided to choose a low tie density. 

\subsubsection{Correctness testing}
To test the correctness of our algorithm's implementation, we implemented an Integer Programming (IP) model for super-stability in {\sc spa-st} (see Appendix \ref{appendixA}) using the Gurobi optimisation solver in Python. We randomly generated $10,000$ {\sc spa-st} instances, each consisting of $100$ students and a constant ratio of projects, lecturers, project capacities and lecturer capacities as described above. Also, each student's preference list was fixed at $10$, with a tie density of $0.1$. With this setup, we verified consistency between the outcomes of our implementation of Algorithm {\sf SPA-ST-super} and our implementation of the IP-based algorithm in terms of the existence or otherwise of a super-stable matching.


\subsubsection{Experiment 1}
In our first experiment, we examined how the length of the students' preference lists affects the existence of a super-stable matching. We increased the number of students $n_1$ while maintaining a constant ratio of projects, lecturers, project capacities and lecturer capacities as described above. For various values of $n_1 \; (100 \leq n_1 \leq 1000)$ in increments of $100$, we varied the length of each student's preference list for various values of $x$ ($5 \leq x \leq 50$) in increments of $5$; and with each of these parameters, we randomly generated $1000$ instances. For all the preference lists, we set $t_{d_s} = t_{d_l} = 0.005$ (on average, $1$ out of $5$ students has a single tie of length $2$ in their preference list, and this holds similarly for the lecturers). 

The result, which is displayed in Fig.~\ref{super-experiment1}, shows that as we varied the length of the preference list, there was no significant uplift in the number of instances that admitted a super-stable matching. In most cases, we observed that the proportion of instances that admit a super-stable matching is slightly higher when the preference list length is $50$ compared to when the preference list length is $5$. The result also shows that the proportion of instances that admit a super-stable matching decreases as the number of students increases. Further, we recorded the time taken for our algorithm's implementation to terminate, and as can be seen in Table \ref{fig:super-time-table}, for an instance size of $1000$ and preference list length $50$, the algorithm terminates in approximately $0.4$ second.

\begin{figure}[H]
\centering
\includegraphics[scale=0.37]{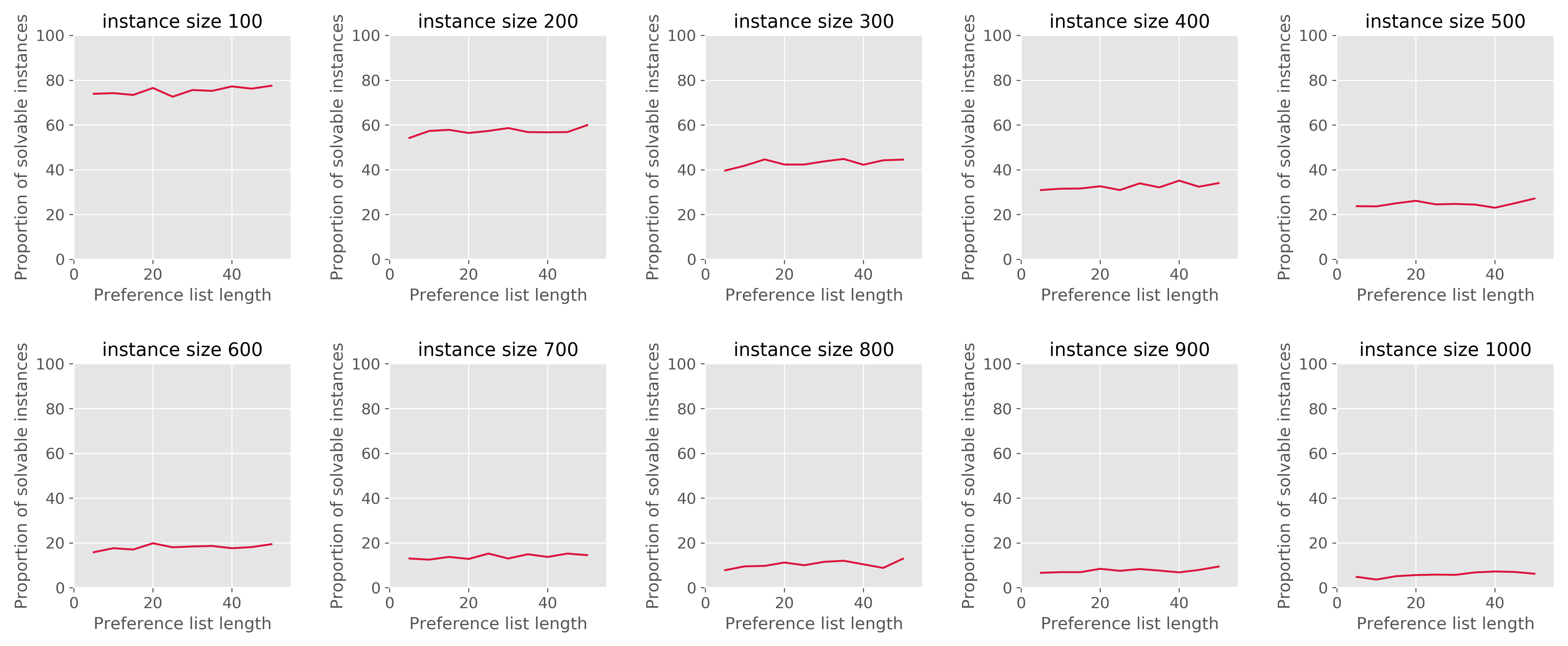} 
\caption{\label{super-experiment1} Proportion of instances that admit a super-stable matching as the size of the instance increases while varying the length of the preference lists with tie density fixed at $0.005$ in both the students' and lecturers' preference lists.}
\end{figure}
\vspace{-0.4in}
\begin{table}[H]
\setlength{\tabcolsep}{0.4em}
\renewcommand*{\arraystretch}{1.2}
\centering
\caption{Time (in seconds) for our algorithm's implementation to terminate averaged over $1000$ for each instance size, with the length of each student's preference list fixed at $50$.}
\label{fig:super-time-table}
\begin{tabular}{c|cccccccccc}
\hline\noalign{\smallskip}
$n_1$ & $100$ & $200$ & $300$ & $400$ & $500$ & $600$ & $700$ & $800$ & $900$ & $1000$ \\ 
\noalign{\smallskip}\hline\noalign{\smallskip}
Time & $0.017$ & $0.046$ & $0.082$ & $0.120$ & $0.160$ & $0.203$ & $0.248$ & $0.298$ & $0.349$ & $0.399$ \\ 
\noalign{\smallskip}\hline
\end{tabular} 
\end{table}

\subsubsection{Experiment 2}
In our second experiment, we investigated how the variation in tie density in both the students' and lecturers' preference lists affects the existence of a super-stable matching. To achieve this, we varied the tie density in the students' preference lists $t_{d_s} \; (0 \leq t_{d_s} \leq 0.05)$ and the tie density in the lecturers' preference lists $t_{d_l} \; (0 \leq t_{d_l} \leq 0.05)$, both in increments of $0.005$. For each pair of tie densities in $t_{d_s} \times t_{d_l}$, we randomly-generated $1000$ {\sc spa-st} instances for various values of $n_1 \; (100 \leq n_1 \leq 1000)$ in increments of $100$. For each of these instances, we maintained the same ratio of projects, lecturers, project capacities and lecturer capacities as in Experiment 1. Considering our discussion from Experiment 1, we fixed the length of each student's preference list at $50$. 

The result displayed in Fig.~\ref{super-experiment2} shows that increasing the tie density in both the students' and lecturers' preference lists reduces the proportion of instances that admit a super-stable matching. In fact, this proportion reduces further as the size of the instance increases. When ties occur only in the lecturers' preference lists, we found that a significantly higher proportion of instances admit a super-stable matching -- about $74\%$ of the randomly-generated {\sc spa-st} instances involving $1000$ students admitted a super-stable matching. The confidence interval for this value is $(0.71, 0.77)$. However, the reverse is the case when ties occur only in the students' preference lists. We have no explanation for this outcome.

\begin{figure}[H]
\centering
\includegraphics[scale=0.085]{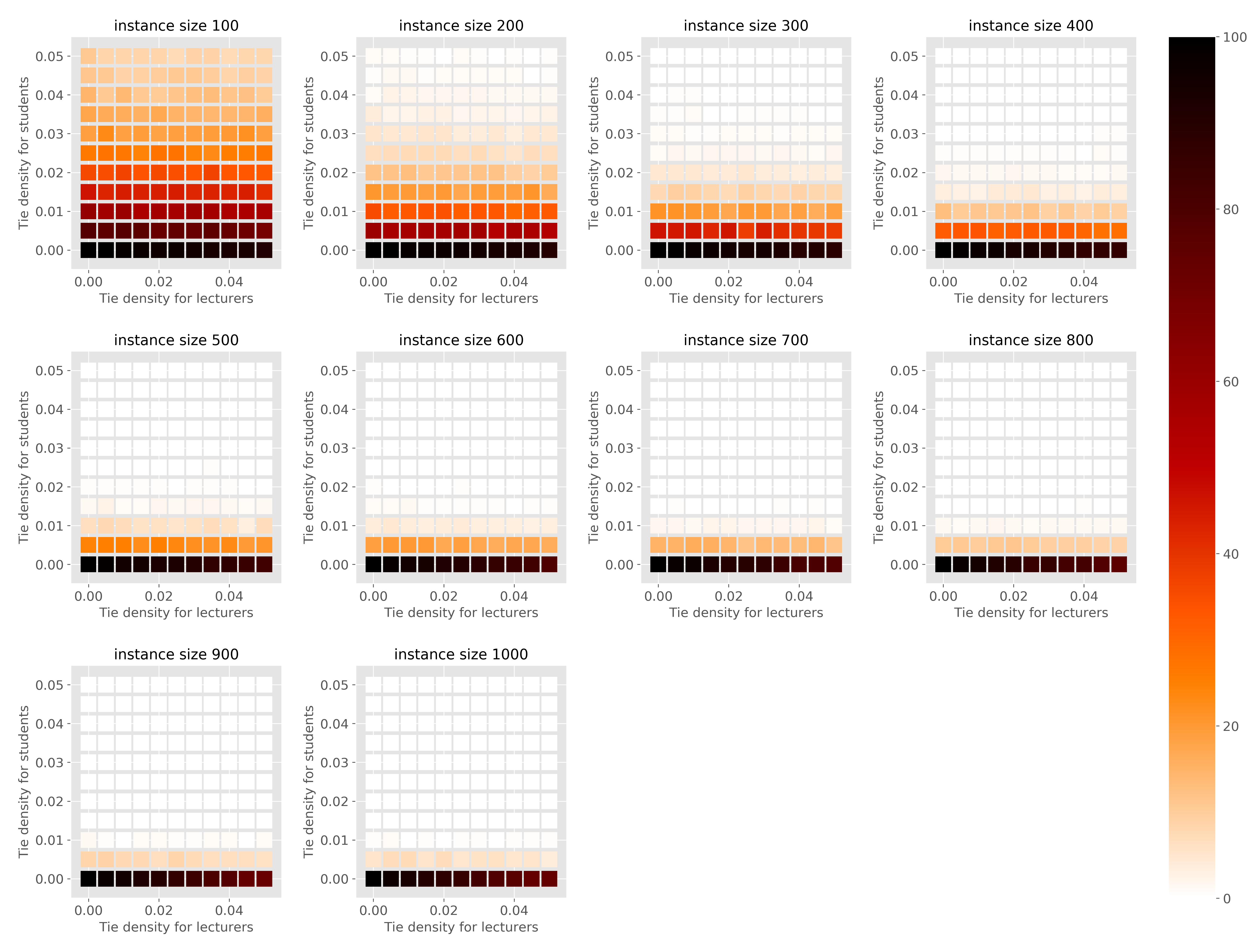} 
\caption{\label{super-experiment2} Result for Experiment 2. Each of the coloured square boxes represents the proportion of the $1000$ randomly-generated {\sc spa-st} instances that admit a super-stable matching, with respect to the tie density in the students' and lecturers' preference lists. See the colour bar transition, as this proportion ranges from dark ($100\%$) to light ($0\%$).}
\end{figure}

%

%
\section{Discussions and Concluding Remarks}
\label{section:conclusions}
In this paper, we have described a linear-time algorithm to find a super-stable matching or report that no such matching exists, given an instance of {\sc spa-st}. We established that for instances that do admit a super-stable matching, our algorithm produces the student-optimal super-stable matching, in the sense that each assigned student has the best project that she could obtain in any super-stable matching. We leave open the formulation of a lecturer-oriented counterpart to our algorithm.

Further, we carried out an empirical evaluation of our algorithm's implementation. The purpose of our experiments was to investigate how the nature of the preference lists affects the existence (or otherwise) of super-stable matchings in an arbitrary instance of {\sc spa-st}. Based on the instances we generated randomly, the experimental results suggest that as we increase the size of the instance and the density of ties in the preference lists, the likelihood of a super-stable matching existing decreases. There was no significant uplift in this likelihood even as we increased the length of the students' preference lists. When the ties occur only in the lecturers' preference lists, we found that a significantly higher proportion of instances admit a super-stable matching. However, the reverse is the case when the ties occur only in the students' preference lists.

Given that there are typically more students than lecturers in practical applications, it could be that only lecturers are permitted to have some form of indifference over the students that they find acceptable, whilst each student might be able to provide a strict ordering over what may be a small number of projects that she finds acceptable. Further evaluation of our algorithm could investigate how other parameters (e.g., the popularity of some projects, or the position of the ties in the preference lists) affect the existence of a super-stable matching. It would also be interesting to examine the existence of super-stable matchings in real {\sc spa-st} datasets.

From a theoretical perspective, the following are other directions for future work. Let $I$ be an arbitrary instance of {\sc spa-st}. 
\begin{enumerate}
\item Can we formalise the results on the probability of a super-stable matching existing in $I$? As mentioned in Section \ref{introduction}, this question has been partially explored for the Stable Roommates problem \cite{PI94}.

\item Is there a characterisation of the set of super-stable matchings in $I$ in terms of a lattice structure? It is known that the set of super-stable matchings in an instance of {\sc smt} forms a distributive lattice under the dominance relation \cite{Man02,Spi95}. To generalise this structural result for {\sc spa-st}, ideas from \cite{Man02,Spi95} would certainly be useful. 
\end{enumerate}

\section*{Acknowledgements}
The authors would like to thank Frances Cooper, Kitty Meeks, Patrick Prosser, and also the anonymous reviewers, for valuable comments that helped to improve the presentation of this paper.


\begin{thebibliography}{99}
%
%

\bibitem{AIM07}
D.J. Abraham, R.W. Irving, and D.F. Manlove.
\newblock Two algorithms for the {S}tudent-{P}roject allocation problem.
\newblock {\em Journal of Discrete Algorithms}, 5(1):79--91, 2007.

\bibitem{AM09}
A.H. {Abu El-Atta} and M.I. Moussa.
\newblock Student project allocation with preference lists over
  (student,project) pairs.
\newblock In {\em Proceedings of ICCEE 09: the 2nd International Conference on
  Computer and Electrical Engineering}, pages 375--379. IEEE, 2009.

\bibitem{AB03}
A.A. Anwar and A.S. Bahaj.
\newblock Student project allocation using integer programming.
\newblock {\em IEEE Transactions on Education}, 46(3):359--367, 2003.

\bibitem{BB96}
G.~Brassard and P.~Bratley.
\newblock {\em Fundamentals of Algorithmics}.
\newblock Prentice-Hall, 1996.

\bibitem{RGSA17}
R.~Calvo-Serrano, G.~Guill\'en-Gos\'{a}lbez, S.~Kohn, and A.~Masters.
\newblock Mathematical programming approach for optimally allocating students'
  projects to academics in large cohorts.
\newblock {\em Education for Chemical Engineers}, 20:11--21, 2017.

\bibitem{CFG19}
M.~Chiarandini, R.~Fagerberg, and S.~Gualandi.
\newblock Handling preferences in student-project allocation.
\newblock {\em Annals of Operations Research}, 275(1):39--78, 2019. 

\bibitem{CM18}
F.~Cooper and D.F.~Manlove.
A 3/2-approximation algorithm for the Student-Project Allocation problem.
\newblock In {\em Proceedings of SEA '18: the 17th International Symposium on
  Experimental Algorithms}, volume 103 of {\em Leibniz International
  Proceedings in Informatics (LIPIcs)}, pages 8:1 -- 8:13, 2018.

\bibitem{CM18a}
F.~Cooper and D.F. Manlove.
\newblock {A 3/2-approximation algorithm for the Student-Project Allocation
  problem}.
\newblock {\em CoRR}, abs/1804.02731, 2018.
\newblock Available from \url{http://arxiv.org/abs/1804.02731}.

\bibitem{GS62}
D.~Gale and L.S. Shapley.
\newblock College admissions and the stability of marriage.
\newblock {\em American Mathematical Monthly}, 69:9--15, 1962.

\bibitem{HSVS05}
P.R. Harper, V.~de~Senna, I.T. Vieira, and A.K. Shahani.
\newblock A genetic algorithm for the project assignment problem.
\newblock {\em Computers and Operations Research}, 32:1255--1265, 2005.

\bibitem{Irv94}
R.W. Irving.
\newblock Stable marriage and indifference.
\newblock {\em Discrete Applied Mathematics}, 48:261--272, 1994.

\bibitem{IMS00}
R.W. Irving, D.F. Manlove, and S.~Scott.
\newblock The {H}ospitals/{R}esidents problem with {T}ies.
\newblock In {\em Proceedings of SWAT '00: the 7th Scandinavian Workshop on
  Algorithm Theory}, volume 1851 of {\em Lecture Notes in Computer Science},
  pages 259--271. Springer, 2000.

\bibitem{IMS03}
R.W. Irving, D.F. Manlove, and S.~Scott.
\newblock Strong stability in the {H}ospitals/{R}esidents problem.
\newblock In {\em Proceedings of STACS '03: the 20th Annual Symposium on
  Theoretical Aspects of Computer Science}, volume 2607 of {\em Lecture Notes
  in Computer Science}, pages 439--450. Springer, 2003.

\bibitem{IMMM99}
K.~Iwama, D.~Manlove, S.~Miyazaki, and Y.~Morita.
\newblock Stable marriage with incomplete lists and ties.
\newblock In {\em Proceedings of ICALP '99: the 26th International Colloquium
  on Automata, Languages, and Programming}, volume 1644 of {\em Lecture Notes
  in Computer Science}, pages 443--452. Springer, 1999.

\bibitem{IMY12}
K.~Iwama, S.~Miyazaki, and H.~Yanagisawa.
\newblock Improved approximation bounds for the student-project allocation
  problem with preferences over projects.
\newblock {\em Journal of Discrete Algorithms}, 13:59--66, 2012.

\bibitem{IYYY19}
A.~Ismaili, K.~Yahiro, T.~Yamaguchi, and M.~Yokoo.
\newblock Student-Project-Resource Matching-Allocation Problems: Two-Sided Matching Meets Resource Allocation.
\newblock In {\em Proceedings of AAMAS '19: the 18th International Conference on Autonomous Agents and MultiAgent Systems}, pages 2033 - 2035, 2019.

\bibitem{Kaz02}
D.~Kazakov.
\newblock Co-ordination of student-project allocation.
\newblock Manuscript, University of York, Department of Computer Science.
  Available from \url{http://www-users.cs.york.ac.uk/kazakov/papers/proj.pdf}
  (last accessed 12 April 2020), 2001.

\bibitem{KIMS15}
A.~Kwanashie, R.W. Irving, D.F. Manlove, and C.T.S. Sng.
\newblock Profile-based optimal matchings in the {S}tudent--{P}roject
  {A}llocation problem.
\newblock In {\em Proceedings of IWOCA '14: the 25th International Workshop on
  Combinatorial Algorithms}, volume 8986 of {\em Lecture Notes in Computer
  Science}, pages 213--225. Springer, 2015.

\bibitem{Man02}
D.F. Manlove.
\newblock The structure of stable marriage with indifference.
\newblock {\em Discrete Applied Mathematics}, 122(1-3):167--181, 2002.


\bibitem{Man13}
D.F. Manlove.
\newblock {\em Algorithmics of Matching Under Preferences}.
\newblock World Scientific, 2013.

\bibitem{MIIMM02}
D.F. Manlove, R.W. Irving, K.~Iwama, S.~Miyazaki, and Y.~Morita.
\newblock Hard variants of stable marriage.
\newblock {\em Theoretical Computer Science}, 276(1-2):261--279, 2002.

\bibitem{MMO18}
D.F. Manlove, D.~Milne, and S.~Olaosebikan.
\newblock {An Integer Programming Approach to the Student-Project Allocation
  Problem with Preferences over Projects.}
\newblock Lecture Notes in Computer Science, vol 10856, pages 313-325, Springer, 2018.

\bibitem{MO08}
D.F. Manlove and G.~O'Malley.
\newblock Student project allocation with preferences over projects.
\newblock {\em Journal of Discrete Algorithms}, 6:553--560, 2008.

\bibitem{OM18}
S.~Olaosebikan and D.~Manlove.
\newblock {Super-Stability in the Student-Project Allocation Problem with
  Ties.}
\newblock In {\em Proceedings of COCOA '18: the 12th Annual International
  Conference on Combinatorial Optimization and Applications}, volume 11346 of
  {\em Lecture Notes in Computer Science}, pages 357--371. Springer, 2018.
  
\bibitem{PI94}
B.G. Pittel and R.W. Irving.
\newblock An upper bound for the solvability probability of a random stable
  roommates instance.
\newblock {\em Random Structures and Algorithms}, 5:465--486, 1994.

\bibitem{Rot84}
A.E. Roth.
\newblock The evolution of the labor market for medical interns and residents:
  a case study in game theory.
\newblock {\em Journal of Political Economy}, 92(6):991--1016, 1984.

\bibitem{Rot90}
A.E. Roth.
\newblock New physicians: A natural experiment in market organization.
\newblock {\em Science}, 250:1524--1528, 1990.

\bibitem{Rot91}
A.E. Roth.
\newblock A natural experiment in the organization of entry level labor
  markets: Regional markets for new physicians and surgeons in the {U.K}.
\newblock {\em American Economic Review}, 81:415--440, 1991.

\bibitem{Spi95}
B.~Spieker.
\newblock The set of super-stable marriages forms a distributive lattice.
\newblock {\em Discrete Applied Mathematics}, 58:79--84, 1995.
\end{thebibliography}




\newpage
\begin{subappendices}

\renewcommand{\thesection}{\Alph{section}}%

\section{An IP model for super-stability in {\small SPA-ST}}
\label{appendixA}
\subsection{Introduction}
In this section, we describe an IP model for super-stability in {\sc spa-st}. Although a super-stable matching in an instance of {\sc spa-st} can be found in polynomial-time (as illustrated by Theorem \ref{thrm:super-optimality}), our reason for this is purely experimental. Let $I$ be an instance of {\sc spa-st} involving a set $\mathcal{S} = \{s_1, s_2, \ldots, s_{n_1}\}$ of students, a set $\mathcal{P} = \{p_1, p_2, \ldots, p_{n_2}\}$ of projects and a set $\mathcal{L} = \{l_1, l_2, \ldots, l_{n_3}\}$ of lecturers. We construct an IP model $J$ of $I$ as follows. Firstly, we create binary variables $x_{i, j} \in \{0, 1\}$ $(1 \leq i \leq n_1, 1 \leq j \leq n_2)$ for each acceptable pair $(s_i, p_j) \in \mathcal{S} \times \mathcal{P}$ such that $x_{i, j}$ indicates whether $s_i$ is assigned to $p_j$ in a solution or not. Henceforth, we denote by $S$ a solution in the IP model $J$, and we denote by $M$ the matching derived from $S$ in the following natural way: if $x_{i,j} = 1$ under $S$ then $s_i$ is assigned to $p_j$ in $M$, otherwise $s_i$ is not assigned to $p_j$ in $M$.

\subsection{Constraints}
In this section, we give the set of constraints to ensure that the assignment obtained from a feasible solution in $J$ is a matching, and that the matching admits no blocking pair.
\paragraph{\textbf{Matching constraints.}}
The feasibility of a matching can be ensured with the following three set of constraints.
\begin{align}
\label{ineq:spa-st-ip-studentassignment}
\sum\limits_{p_{j} \in A_{i}} x_{i,j}  \leq 1 &\qquad (1 \leq i \leq n_1), \\
\label{ineq:spa-st-ip-projectcapacity}
\sum\limits_{i = 1}^{n_1} x_{i,j}   \leq c_j   & \qquad (1 \leq j \leq n_2), \\
\label{ineq:spa-st-ip-lecturercapacity}
\sum\limits_{i = 1}^{n_1} \; \sum\limits_{p_{j} \in P_k} x_{i,j}  \leq d_k  & \qquad (1 \leq k \leq  n_3)\enspace.
\end{align}

Note that Inequality \eqref{ineq:spa-st-ip-studentassignment} ensures that each student $s_i \in \mathcal{S}$ is not assigned to more than one project, while Inequalities \eqref{ineq:spa-st-ip-projectcapacity} and \eqref{ineq:spa-st-ip-lecturercapacity} ensure that the capacity of each project $p_j \in \mathcal{P}$ and each lecturer $l_k \in \mathcal{L}$ is not exceeded.


Given an acceptable pair $(s_i, p_j)$, we define $\rank(s_i, p_j)$, the \textit{rank} of $p_j$ on $s_i$'s preference list, to be $r+1$, where $r$ is the number of projects that $s_i$ prefers to $p_j$. \label{rank} Clearly, projects that are tied together on $s_i$'s preference list have the same rank. Given a lecturer $l_k \in \mathcal{L}$ and a student $s_i \in \mathcal{L}_k$, we define $\rank(l_k, s_i)$, the \textit{rank} of $s_i$ on $l_k$'s preference list, to be $r+1$, where $r$ is the number of students that $l_k$ prefers to $s_i$. Similarly, students that are tied together on $l_k$'s preference list have the same rank. With respect to an acceptable pair $(s_i, p_j)$, we define $S_{i,j} = \{p_{j'} \in A_i: \rank(s_i, p_{j'}) < \rank(s_i, p_j)\}$, the set of projects that $s_i$ prefers to $p_j$. Let $l_k$ be the lecturer who offers $p_j$. We also define $T_{i,j,k} = \{s_{i'} \in \mathcal{L}_{k}^{j}: \rank(l_k, s_{i'}) < \rank(l_k,  s_{i})\}$, the set of students that are better than $s_i$ on the projected preference list of $l_k$ for $p_j$. Finally, we define $D_{i,k} = \{s_{i'} \in \mathcal{L}_{k}: \rank(l_k,  s_{i'}) < \rank(l_k, s_{i})\}$, the set of students that are better than $s_i$ on $l_k$'s preference list. 

In what follows, we fix an arbitrary acceptable pair $(s_i, p_j)$ and we enforce constraints to ensure that $(s_i, p_j)$ does not form a blocking pair for the matching $M$. Henceforth, $l_k$ is the lecturer who offers $p_j$. 

\paragraph{\textbf{Blocking pair constraints.}} First, we define $\theta_{i,j} = 1 - x_{i,j} - \sum\limits_{p_{j'} \in S_{i, j}}x_{i,j'}$. Intuitively, $\theta_{i,j} = 1$ if and only if $s_i$ is unassigned in $M$, or  $s_i$ prefers $p_j$ to $M(s_i)$ or is indifferent between them. Henceforth, if $(s_i, p_j)$ forms a blocking pair for $M$ then we refer to $(s_i, p_j)$ as a blocking pair of type (i), type (ii) or type (iii), according as $(s_i, p_j)$ satisfies condition (i), (ii), or (iii) of Definition \ref{definition:super-stability}, respectively. We describe the constraints to avoid these types of blocking pair as follows.
\paragraph{\textbf{Type (i)}. \label{type-i}}
First, we create a binary variable $\alpha_j$ in $J$ such that if $p_j$ is undersubscribed in $M$ then $\alpha_j = 1$. We enforce this condition by imposing the following constraint.
\begin{eqnarray}
\label{ineq:spa-st-ip-project-under}
c_j \alpha_j \geq c_j - \sum\limits_{i' = 1}^{n_1} x_{i',j},
\end{eqnarray}
where $\sum_{i' = 1}^{n_1} x_{i',j} = |M(p_j)|$. If $p_j$ is undersubscribed in $M$ then the RHS of Inequality \eqref{ineq:spa-st-ip-project-under} is at least $1$ and this implies that $\alpha_j = 1$, otherwise $\alpha_j$ is not constrained. Next, we create a binary variable $\beta_k$ in $J$ such that if $l_k$ is undersubscribed in $M$ then $\beta_k = 1$. We enforce this condition by imposing the following constraint:
\begin{eqnarray}
\label{ineq:spa-st-ip-lecturerunder}
d_k\beta_k \geq d_k - \sum\limits_{i' = 1}^{n_1} \; \sum\limits_{p_{j'} \in P_k} x_{i',j'},
\end{eqnarray}
where $\sum\limits_{i' = 1}^{n_1} \; \sum\limits_{p_{j'} \in P_k} x_{i',j'} = |M(l_k)|$. If $l_k$ is undersubscribed in $M$ then the RHS of Inequality \eqref{ineq:spa-st-ip-lecturerunder} is at least $1$ and this implies that $\beta_k = 1$, otherwise $\beta_k$ is not constrained. The following constraint ensures that $(s_i, p_j)$ does not form a type (i) blocking pair for $M$.
\begin{align}
\label{ineq:super-bp-type-i}
\Aboxed{
\theta_{i,j} + \alpha_{j} + \beta_k \leq 2\enspace.}
\end{align}
\paragraph{\textbf{Type (ii)}. \label{type-ii}}
We create a binary variable $\eta_{k}$ in $J$ such that if $l_k$ is full in $M$ then $\eta_{k} = 1$. We enforce this condition by imposing the following constraint.
\begin{eqnarray}
\label{ineq:spa-st-ip-lecturerfull}
d_k\eta_{k} \geq \left(1 + \sum\limits_{i' = 1}^{n_1} \; \sum\limits_{p_{j'} \in P_k} x_{i',j'}\right) - d_k\enspace.
\end{eqnarray}
If $l_k$ is full in $M$ then the RHS of Constraint  \eqref{ineq:spa-st-ip-lecturerfull} is at least $1$ and this implies that $\eta_k = 1$, otherwise $\eta_k$ is not constrained. Next, we create a binary variable $\delta_{i,k}$ in $J$ such that if $s_i \in M(l_k)$, or $l_k$ prefers $s_i$ to a worst student in $M(l_k)$ or is indifferent between them, then $\delta_{i,k} = 1$. We enforce this condition by imposing the following constraint.
\begin{eqnarray}
\label{ineq:spa-st-ip-lecturerfull-student}
d_k\delta_{i,k} \geq \sum\limits_{i' = 1}^{n_1} \; \sum\limits_{p_{j'} \in P_k} x_{i',j'} -  \sum\limits_{s_{i'} \in D_{i,k}} \; \sum\limits_{p_{j'} \in P_k}x_{i',j'}\enspace.
\end{eqnarray}
Note that if $s_i \in M(l_k)$ or $l_k$ prefers $s_i$ to a worst student in $M(l_k)$ or $l_k$ is indifferent between them, then the RHS of Constraint  \eqref{ineq:spa-st-ip-lecturerfull-student} is at least 1 and this implies that $\delta_{i,k} = 1$, otherwise $\delta_{i,k}$ is not constrained. The following constraint ensures that $(s_i, p_j)$ does not form a type (ii) blocking pair for $M$.
\begin{align}
\label{ineq:super-bp-type-ii}
\Aboxed{
\theta_{i,j} + \alpha_{j} + \eta_{k} + \delta_{i,k} \leq 3\enspace.}
\end{align}
\paragraph{\textbf{Type (iii)}. \label{type-iii}}
Next we create a binary variable $\gamma_{j}$ in $J$ such that if $p_j$ is full in $M$ then $\gamma_{j} = 1$. We enforce this condition by imposing the following constraint.
\begin{eqnarray}
\label{ineq:spa-st-ip-projectfull}
c_j\gamma_{j} \geq \left( 1 + \sum\limits_{i' = 1}^{n_1} \; x_{i',j} \right) - c_j\enspace.
\end{eqnarray}
where $\sum_{i' = 1}^{n_1} x_{i',j} = |M(p_j)|$. If $p_j$ is full in $M$ then the RHS of Inequality \eqref{ineq:spa-st-ip-projectfull} is at least $1$ and this implies that $\gamma_j = 1$, otherwise $\gamma_j$ is not constrained. Next, we create a binary variable $\lambda_{i,j,k}$ in $J$ such that if $l_k$ prefers $s_i$ to a worst student in $M(p_j)$ or is indifferent between them, then $\lambda_{i,j,k}=1$. We enforce this condition by imposing the following constraint.
\begin{eqnarray}
\label{ineq:spa-st-ip-projectfull-student}
c_j\lambda_{i,j,k} \geq \sum\limits_{i' = 1}^{n_1} x_{i',j} - \sum\limits_{s_{i'} \in T_{i,j,k}} x_{i',j}\enspace.
\end{eqnarray}
Note that if $l_k$ prefers $s_i$ to a worst student in $M(p_j)$ or is indifferent between them, then the RHS of Inequality  \eqref{ineq:spa-st-ip-projectfull-student} is at least 1 and this implies that $\lambda_{i,j,k} = 1$, otherwise $\lambda_{i,j,k}$ is not constrained. The following constraint ensures that $(s_i, p_j)$ does not form a type (iii) blocking pair for $M$.
\begin{align}
\label{ineq:super-bp-type-iii}
\Aboxed{
\theta_{i,j} +  \gamma_j + \lambda_{i,j,k} \leq 2\enspace.}
\end{align}

\subsection{Variables} 
\label{sect:spa-st-ip-variables}
We define a collective notation for each set of variables involved in $J$ as follows:
\begin{center}
\begin{tabular}{p{5cm}p{0.2cm}p{6cm}}
$A = \{ \alpha_{j}: 1 \leq j \leq n_2\}$, &  & $\Gamma = \{ \gamma_{j}: 1 \leq j \leq n_2\}$, \\
$B = \{\beta_{k}: 1 \leq k \leq n_3\}$, &  & $\Delta = \{ \delta_{i,k}: 1 \leq i \leq n_1, 1 \leq k \leq n_3\}$, \\ 
$N = \{\eta_{k}: 1 \leq k \leq n_3\}$, &  & $X = \{ x_{i,j}: 1 \leq i \leq n_1, 1 \leq j \leq n_2\}$, \\ 
\multicolumn{3}{p{12cm}}{$\Lambda = \{\lambda_{i,j,k}: 1 \leq i \leq n_1, 1 \leq j \leq n_2, 1 \leq k \leq n_3 \}$\enspace.}  \\

\end{tabular} 
\end{center}

\subsection{Objective function}
On one hand, all super-stable matchings are of the same size, and thus nullifies the need for an objective function. On the other hand, optimization solvers require an objective function in addition to the variables and constraints in order to produce a solution. The objective function given below involves maximising the summation of all the $x_{i,j}$ binary variables. 
\begin{align}
\label{ineq:super-objectivefunction}
\Aboxed{\max \sum\limits_{i = 1}^{n_1} \; \sum\limits_{p_j \in A_i}x_{i,j}\enspace.}
\end{align}
Finally, we have constructed an IP model $J$ of $I$ comprising the set of integer-valued variables $A, B, N, X, \Gamma, \Delta, \mbox{ and } \Lambda$, the set of Inequalities \eqref{ineq:spa-st-ip-studentassignment} - \eqref{ineq:super-bp-type-iii} and an objective function \eqref{ineq:super-objectivefunction}. Note that $J$ can then be used to construct a super-stable matching in $I$, should one exist. 

\subsection{Correctness of the IP model}
Given an instance $I$ of {\sc spa-st} formulated as an IP model $J$ using the above transformation, we present the following lemmas regarding the correctness of $J$.
\begin{lemma}
\label{lemma:super-solution-stability}
A feasible solution $S$ to $J$ corresponds to a super-stable matching $M$ in $I$.
\end{lemma}
\begin{proof}
Assume firstly that $J$ has a feasible solution $S$. Let $M = \{(s_i, p_j) \in \mathcal{S} \times \mathcal{P}: x_{i,j} = 1\}$ be the assignment in $I$ generated from $S$. We note that Inequality \eqref{ineq:spa-st-ip-studentassignment} ensures that each student is assigned in $M$ to at most one project. Moreover, Inequalities \eqref{ineq:spa-st-ip-projectcapacity} and \eqref{ineq:spa-st-ip-lecturercapacity} ensures that the capacity of each project and lecturer is not exceeded in $M$. Thus $M$ is a matching. We will prove that Inequalities \eqref{ineq:spa-st-ip-project-under} - \eqref{ineq:super-bp-type-iii} ensures that $M$ admits no blocking pair.

Suppose for a contradiction that there exists some acceptable pair $(s_i, p_j)$ that forms a blocking pair for $M$, where $l_k$ is the lecturer who offers $p_j$. This implies that either $s_i$ is unassigned in $M$ or $s_i$ prefers $p_j$ to $M(s_i)$ or is indifferent between them. Thus $\sum_{p_{j'} \in S_{i,j}} x_{{i},{j'}} = 0$. Moreover, since $s_i$ is not assigned to $p_j$ in $M$, we have that $x_{i,j} = 0$. Thus $\theta_{i,j} = 1$.

Now suppose $(s_i, p_j)$ forms a type (i) blocking pair for $M$. Then each of $p_j$ and $l_k$ is undersubscribed in $M$. Thus $\sum_{i' = 1}^{n_1} x_{i',j} < c_j$ and $\sum_{i' = 1}^{n_1} \; \sum_{p_{j'} \in P_k} x_{i',j'}$ $< d_k$. This implies that the RHS of Inequality \eqref{ineq:spa-st-ip-project-under} and the RHS of Inequality \eqref{ineq:spa-st-ip-lecturerunder} is strictly greater than $0$. Moreover, since $S$ is a feasible solution to $J$, $\alpha_j = \beta_k = 1$. Hence, the LHS of Inequality \eqref{ineq:super-bp-type-i} is strictly greater than $2$, a contradiction to the feasibility of $S$.

Now suppose $(s_i, p_j)$ forms a type (ii) blocking pair for $M$. Then $p_j$ is undersubscribed in $M$ and as explained above, $\alpha_j = 1$. Also, $l_k$ is full in $M$ and this implies that the RHS of Inequality \eqref{ineq:spa-st-ip-lecturerfull} is strictly greater than $0$. Since $S$ is a feasible solution, we have that $\eta_k = 1$. Furthermore, either $s_i \in M(l_k)$ or $l_k$ prefers $s_i$ to a worst student in $M(l_k)$ or $l_k$ is indifferent between them. In any of these cases, the RHS of Inequality \eqref{ineq:spa-st-ip-lecturerfull-student} is strictly greater than $0$. Thus $\delta_{i,k} = 1$, since $S$ is a feasible solution. Hence the LHS of Inequality \eqref{ineq:super-bp-type-ii} is strictly greater than 3, a contradiction to the feasibility of $S$.

Finally, suppose $(s_i, p_j)$ forms a type (iii) blocking pair for $M$. Then $p_j$ is full in $M$ and thus the RHS of Inequality \eqref{ineq:spa-st-ip-projectfull} is strictly greater than $0$. Since $S$ is a feasible solution, we have that $\gamma_j = 1$. In addition, $l_k$ prefers $s_i$ to a worst student in $M(p_j)$ or is indifferent between them. This implies that the RHS of Inequality \eqref{ineq:spa-st-ip-projectfull-student} is strictly greater than $0$. Thus $\lambda_{i,j,k} = 1$, since $S$ is a feasible solution. Hence the LHS of Inequality \eqref{ineq:super-bp-type-iii} is strictly greater than 2, a contradiction to the feasibility of $S$. Hence $M$ admits no blocking pair; and hence, $M$ is a super-stable matching in $I$.
\qed \end{proof}

\begin{lemma}
\label{lemma:super-stability-solution}
A super-stable matching $M$ in $I$ corresponds to a feasible solution $S$ to $J$.
\end{lemma}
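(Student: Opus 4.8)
The plan is to reverse the construction of Lemma \ref{lemma:solution-super-stability}: starting from a super-stable matching $M$, I would build an explicit feasible solution $S$ to $J$ and then check every constraint. First I set $x_{i,j} = 1$ precisely when $(s_i, p_j) \in M$, and $x_{i,j} = 0$ otherwise. Since $M$ is a matching, each student is assigned at most one project and no project or lecturer capacity is exceeded, so the matching constraints \eqref{ineq:studentassignment}--\eqref{ineq:lecturercapacity} hold immediately; moreover $obj(S) = \sum_{i}\sum_{p_j \in A_i} x_{i,j} = |M|$, which gives the claimed equality of values.

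Next I would fix the auxiliary variables to their intended truth values in $M$: set $\alpha_j = 1$ exactly when $p_j$ is undersubscribed, $\beta_k = 1$ (resp.\ $\eta_k = 1$) exactly when $l_k$ is undersubscribed (resp.\ full), $\gamma_j = 1$ exactly when $p_j$ is full, $\delta_{i,k} = 1$ exactly when either $s_i \in M(l_k)$ or $l_k$ prefers $s_i$ to a worst student in $M(l_k)$ or is indifferent, and $\lambda_{i,j,k} = 1$ exactly when $l_k$ prefers $s_i$ to a worst student in $M(p_j)$ or is indifferent. Each defining constraint \eqref{ineq:project-undersubscribed}, \eqref{ineq:lecturerundersubscribedconstraint}, \eqref{ineq:lecturerfullconstraint}, \eqref{ineq:lecturerfulleitherstudentconstraint}, \eqref{ineq:projectfullconstraint}, \eqref{ineq:projectfulleitherstudentconstraint} is one-directional: its right-hand side is strictly positive exactly when the corresponding condition holds in $M$, so forcing the variable to $1$ in that case (and $0$ otherwise) satisfies the constraint. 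This step is routine counting once each right-hand side is rewritten as a cardinality; for instance, the right-hand side of \eqref{ineq:lecturerfulleitherstudentconstraint} equals the number of students in $M(l_k)$ that $l_k$ does \emph{not} strictly prefer to $s_i$.

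The substantive step is to verify the three boxed blocking-pair constraints \eqref{ineq:super-bp-type-i}, \eqref{ineq:super-bp-type-ii} and \eqref{ineq:super-bp-type-iii}, and here I would argue by contradiction using the super-stability of $M$. Fix an acceptable pair $(s_i,p_j)$ with $l_k$ offering $p_j$. Since every variable appearing is binary and each bound is one less than the number of summands, a violation of any one of these constraints forces all of its terms to equal $1$ simultaneously. The key observation to establish first is the semantic reading of $\theta_{i,j} = 1 - x_{i,j} - \sum_{p_{j'} \in S_{i,j}} x_{i,j'}$: because $S_{i,j}$ collects only the projects that $s_i$ ranks strictly above $p_j$, a short case analysis on $M(s_i)$ shows that $\theta_{i,j} = 1$ if and only if $s_i$ is unassigned, or $s_i$ prefers $p_j$ to $M(s_i)$, or is indifferent between them, i.e.\ condition (a) of Definition \ref{definition:super-stability}. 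Combining this with the truth values fixed above, a violation of \eqref{ineq:super-bp-type-i} yields $\theta_{i,j} = \alpha_j = \beta_k = 1$, which says exactly that $(s_i,p_j)$ is a type (i) blocking pair; likewise violations of \eqref{ineq:super-bp-type-ii} and \eqref{ineq:super-bp-type-iii} produce type (ii) and type (iii) blocking pairs respectively. Each contradicts the super-stability of $M$, so all three constraints hold and $S$ is feasible.

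I expect the main obstacle to be the bookkeeping around ties: the sets $S_{i,j}$, $D_{i,k}$ and $T_{i,j,k}$ are defined via \emph{strict} rank inequalities, whereas the super-stability conditions speak of ``prefers $\dots$ or is indifferent''. I would need to be careful that ``not strictly better than $s_i$'' correctly captures ``the worst assigned student is tied with or worse than $s_i$'', so that the quantities $\delta_{i,k}$, $\lambda_{i,j,k}$ and $\theta_{i,j}$ align precisely with the indifference clauses of Definition \ref{definition:super-stability}. Getting these equivalences exactly right is what makes the translation from a violated constraint back to a genuine blocking pair valid.
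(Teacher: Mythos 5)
Your proposal is correct and follows essentially the same route as the paper's proof: define $x_{i,j}$ from $M$, set each auxiliary variable to $1$ exactly when its intended condition holds in $M$ (which satisfies the one-directional defining constraints), and then verify the three boxed blocking-pair constraints by contradiction, since a violation would exhibit a type (i), (ii) or (iii) blocking pair against the super-stability of $M$. Your explicit treatment of $\theta_{i,j}$ as a derived expression whose value encodes condition (a) is a slightly more careful reading than the paper's phrasing, but the argument is the same.
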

\begin{proof}
Let $M$ be a super-stable matching in $I$. First we set all the binary variables involved in $J$ to $0$. For each $(s_i, p_j) \in M$, we set $x_{i,j} = 1$. Since $M$ is a matching, it is clear that Inequalities \eqref{ineq:spa-st-ip-studentassignment} - \eqref{ineq:spa-st-ip-lecturercapacity} is satisfied. For any acceptable pair $(s_i, p_j) \in (\mathcal{S} \times \mathcal{P}) \setminus M$ such that $s_i$ is unassigned in $M$ or $s_i$ prefers $p_j$ to $M(s_i)$ or is indifferent between them, we set $\theta_{i,j} = 1$. For any project $p_j \in \mathcal{P}$ such that $p_j$ is undersubscibed in $M$, we set $\alpha_j = 1$ and thus Inequality \eqref{ineq:spa-st-ip-project-under} is satisfied. For any lecturer $l_k \in \mathcal{L}$ such that $l_k$ is undersubscribed in $M$, we set $\beta_k = 1$ and thus Inequality \eqref{ineq:spa-st-ip-lecturerunder} is satisfied. 

Now, for Inequality \eqref{ineq:super-bp-type-i} not to be satisfied, its LHS must be strictly greater than 2. This would only happen if there exists some $(s_i, p_j) \in (\mathcal{S} \times \mathcal{P}) \setminus M$, where $l_k$ is the lecturer who offers $p_j$, such that $\theta_{i,j} = 1$, $\alpha_j = 1$ and $\beta_k = 1$. This implies that either $s_i$ is unassigned in $M$ or $s_i$ prefers $p_j$ to $M(s_i)$ or is indifferent between them, and each of $p_j$ and $l_k$ is undersubscribed in $M$. Thus $(s_i, p_j)$ forms a type (i) blocking pair for $M$, a contradiction to the super-stability of $M$. Hence, Inequality \eqref{ineq:super-bp-type-i} is satisfied.

For any lecturer $l_k \in \mathcal{L}$ such that $l_k$ is full in $M$, we set $\eta_k = 1$. Thus Inequality \eqref{ineq:spa-st-ip-lecturerfull} is satisfied. Let $(s_i, p_j)$ be an acceptable pair such that $p_j \in P_k$ and $(s_i, p_j) \notin M$. If $s_i \in M(l_k)$ or $l_k$ prefers $s_i$ to a worst student in $M(l_k)$ or is indifferent between them, we set $\delta_{i,k} = 1$. Thus Inequality \eqref{ineq:spa-st-ip-lecturerfull-student} is satisfied. Suppose Inequality \eqref{ineq:super-bp-type-ii} is not satisfied. Then there exists $(s_i, p_j) \in (\mathcal{S} \times \mathcal{P}) \setminus M$, where $l_k$ is the lecturer who offers $p_j$, such that $\theta_{i,j} = 1$, $\alpha_j = 1$, $\eta_k = 1$ and $\delta_{i,k} = 1$. This implies that either $s_i$ is unassigned in $M$ or $s_i$ prefers $p_j$ to $M(s_i)$ or is indifferent between them. In addition, $p_j$ is undersubscribed in $M$, $l_k$ is full in $M$ and either $s_i \in M(l_k)$ or $l_k$ prefers $s_i$ to a worst student in $M(l_k)$ or is indifferent between them. Thus $(s_i, p_j)$ forms a type (ii) blocking pair for $M$, a contradiction to the super-stability of $M$. Hence Inequality \eqref{ineq:super-bp-type-ii} is satisfied.

Finally, for any project $p_j \in \mathcal{P}$ such that $p_j$ is full in $M$, we set $\gamma_j = 1$. Thus Inequality \eqref{ineq:spa-st-ip-projectfull} is satisfied. Let $l_k$ be the lecturer who offers $p_j$ and let $(s_i, p_j)$ be an acceptable pair. If $l_k$ prefers $s_i$ to a worst student in $M(p_j)$ or is indifferent between them, we set $\lambda_{i,j,k} = 1$. Thus Inequality \eqref{ineq:spa-st-ip-projectfull-student} is satisfied. Suppose Inequality \eqref{ineq:super-bp-type-iii} is not satisfied. Then there exists some $(s_i, p_j) \in (\mathcal{S} \times \mathcal{P}) \setminus M$ such that $\theta_{i,j} = 1$, $\gamma_j = 1$ and $\lambda_{i,j,k} = 1$. This implies that either $s_i$ is unassigned in $M$ or $s_i$ prefers $p_j$ to $M(s_i)$ or is indifferent between them. In addition, $p_j$ is full in $M$ and $l_k$ prefers $s_i$ to a worst student in $M(p_j)$ or is indifferent between them. Thus $(s_i, p_j)$ forms a type (iii) blocking pair for $M$, a contradiction to the super-stability of $M$. Hence, Inequality \eqref{ineq:super-bp-type-iii} is satisfied.
Hence $S$, comprising the above assignments of values to the variables in $A \cup B \cup N \cup X \cup \Gamma \cup \Delta \cup \Lambda$, is a feasible solution to $J$.
\qed \end{proof}

The following theorem is a consequence of Lemmas \ref{lemma:super-solution-stability} and \ref{lemma:super-stability-solution}.
\begin{theorem}
\label{theorem:super-stable-solution}
Let $I$ be an instance of {\sc spa-st} and let $J$ be the IP model for $I$ as described above. A feasible solution to $J$ corresponds to a super-stable matching in $I$. Conversely, a super-stable matching in $I$ corresponds to a feasible solution to $J$.
\end{theorem}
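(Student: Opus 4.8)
The plan is to derive the theorem directly from the two lemmas that precede it, treating it as their immediate combination. Each of the theorem's two sentences is simply the cardinality-free restatement of one lemma, so no genuinely new argument is required.

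For the first assertion, given any feasible solution $S$ to $J$, I would appeal to Lemma \ref{lemma:solution-super-stability}: the assignment $M = \{(s_i,p_j) : x_{i,j}=1\}$ obtained from $S$ is shown there to be a matching (by the matching constraints bounding the student, project and lecturer loads) that admits no type (i), (ii) or (iii) blocking pair (by the blocking-pair constraints), hence a super-stable matching, with $obj(S)=|M|$. For the converse, given a super-stable matching $M$, I would appeal to Lemma \ref{lemma:super-stability-solution}, which exhibits an explicit assignment of values to the variables in $A \cup B \cup N \cup \Gamma \cup \Delta \cup \Lambda$ (setting $x_{i,j}=1$ exactly when $(s_i,p_j)\in M$) that satisfies every constraint of $J$, so that $S$ is feasible with $|M|=obj(S)$. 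Putting the two directions together establishes the theorem.

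Since both correspondences are witnessed by the same natural identification $x_{i,j}=1 \iff (s_i,p_j)\in M$, there is essentially no obstacle at the level of the theorem itself. The only point worth emphasising is where the real work lives, namely inside the lemmas: because the auxiliary indicator variables (such as $\alpha_j$, $\beta_k$, $\eta_k$, $\gamma_j$, $\delta_{i,k}$ and $\lambda_{i,j,k}$) are only forced \emph{upward} to $1$ by their defining inequalities and are otherwise unconstrained, the forward direction must argue that any would-be blocking pair forces the relevant indicators to $1$ and hence violates a blocking-pair inequality, while the converse must instead \emph{choose} values for these indicators consistently with $M$ so that no blocking-pair inequality is violated. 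Both of these subtleties were already dispatched in Lemmas \ref{lemma:solution-super-stability} and \ref{lemma:super-stability-solution}, so the theorem follows in a single line as their stated consequence.
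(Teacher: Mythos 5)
Your proposal is correct and matches the paper exactly: the paper derives Theorem \ref{theorem:super-stable-solution} as an immediate consequence of Lemmas \ref{lemma:solution-super-stability} and \ref{lemma:super-stability-solution}, precisely as you do. Your additional remark about where the real work lies (forcing the indicator variables upward in one direction versus choosing them consistently in the other) accurately reflects the content of those lemmas' proofs.
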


\end{subappendices}

\end{document}